\newtheorem{defi}{Definition}[section]
\newtheorem{theorem}{Theorem}[section]
\newtheorem{lem}{Lemma}[section]
\newtheorem{proposition}{Proposition}[section]
\theoremstyle{remark}
\newtheorem{algorithm}{Algorithm}
\newcommand{\mbf}[1]{\mbox{\boldmath$#1$}}
\numberwithin{equation}{section}
\title{An improved return-mapping scheme for nonsmooth yield surfaces: PART I - the Haigh-Westergaard coordinates}
\author{S. Sysala$^1$, M. Cermak$^{1,2}$, T. Koudelka$^3$, J. Kruis$^3$, J.
Zeman$^{2,3}$, R. Blaheta$^1$\\ \\ $^1$Institute of Geonics CAS, Ostrava, Czech
Republic\\ $^2$V\v SB--Technical University of Ostrava, Ostrava, Czech Republic\\
$^3$Czech Technical University in Prague, Prague, Czech Republic}
\begin{document}

\maketitle

\begin{abstract}
The paper is devoted to the numerical solution of elastoplastic constitutive initial value problems. An
improved form of the implicit return-mapping scheme for nonsmooth yield surfaces is proposed that systematically builds on a
subdifferential formulation of the flow rule. The main advantage of this approach is that the treatment
of singular points, such as apices or edges at which the flow direction is multivalued involves only a uniquely defined set of non-linear
equations, similarly to smooth yield surfaces. This paper (PART I) is focused on isotropic models containing: $a)$ yield surfaces with one or two apices (singular points) laying on the hydrostatic axis; $b)$ plastic pseudo-potentials that are independent of the Lode angle; $c)$ nonlinear isotropic hardening (optionally). It is shown that for some models the improved integration scheme also enables to a priori decide about a type of the return and investigate existence,  uniqueness and semismoothness of discretized constitutive operators in implicit form. Further,  the semismooth Newton method is introduced to solve incremental boundary-value problems. The paper also contains numerical examples related to slope stability with available Matlab implementation.
\end{abstract}

\noindent
Keywords: elastoplasticity, nonsmooth yield surface, multivalued flow direction, implicit return-mapping scheme,  semismooth Newton method, limit analysis 

%********************** Section 1**********************************************************************

\section{Introduction}
\label{sec_int}

The paper is devoted to the numerical solution of small-strain
quasi-static elastoplastic problems. Such a problem consists of the constitutive
initial value problem (CIVP) and the balance equation representing the principle
of virtual work. A broadly exploited and universal numerical/computational
concept includes the following steps: 

\begin{itemize}

\item[$(a)$] time-discretization of CIVP leading to an incremental constitutive
problem; 
\item[$(b)$] derivation of the constitutive and consistent tangent operators;
\item[$(c)$] substitution of the constitutive (stress-strain) operator into the
balance equation leading to the incremental boundary value problem in terms of displacements; 
\item[$(d)$] finite element
discretization and derivation of a system of nonlinear equations; 
\item[$(e)$] solving
the system using a nonsmooth variant of the Newton method. 
\end{itemize}

CIVP satisfies thermodynamical laws and usually involves internal variables such as plastic strains
or hardening parameters. Several integration schemes for numerical solution of CIVP were suggested. For their overview, we refer, e.g., \cite{AT15, NPO08, SH98, SHVS14} and references introduced therein. If the implicit or trapezoidal Euler method is
used then the incremental constitutive problem is solved by the elastic
predictor/plastic corrector method. The plastic correction leads to the
return-mapping scheme. We distinguish, e.g., implicit, trapezoidal or midpoint
return-mappings depending on a chosen time-discretization \cite[Chapter 7]{NPO08}. 

%{Among other
%solution methods for CIVP it belongs the cutting plane algorithm which is
%popular due to its simplicity since nonlinear systems of equations need not be
%solved in comparison to the return-mapping. On the other hand, one cannot derive
%the consistent tangent operator and thus it is necessary to use a modified
%version of the Newton-Raphson method in step $(e)$}
%An overview of solution schemes and their detailed derivation can be found in
%the book \cite[Chapter 7]{NPO08}  and the references therein.

In this paper, we assume that the plastic flow direction is generated by the plastic potential, $g$. If $g$ is smooth then the corresponding plastic flow direction is uniquely determined by the derivative of $g$ and consequently, the plastic flow rule reads as follows, e.g.~\cite[Chapter 8]{NPO08}:
\begin{equation}
\dot{\mbf{\varepsilon}}^p=\dot\lambda\frac{\partial g(\mbf\sigma,A)}{\partial \mbf\sigma},\quad g=g(\mbf\sigma,A).
\label{eqn_flow_rule}
\end{equation}
Here, 
$\dot{\mbf{\varepsilon}}^p$, $\dot\lambda$, $\mbf\sigma$,  and $A$
denotes the plastic strain rate, the  plastic multiplier
rate, the stress tensor and the hardening thermodynamical forces, respectively. The corresponding return-mapping scheme is relatively straightforward and leads to solving
a system of nonlinear equations. A difficulty arises when $g$ is {\it nonsmooth}. Mostly, it happens if the yield surface contains singular points, such as apices or edges. Then the function $g$ is rather pseudo-potential than potential and its derivative need not exist everywhere. In such a case, the rule (\ref{eqn_flow_rule}) is usually completed by some additive formulas depending on particular cases of $g$ and $\mbf\sigma$ in an ad-hoc manner. For example, the
implementation of the Mohr-Coulomb model reported in~\cite[Chapter 6, 8]{NPO08} employs one, two, or six
plastic multipliers $\lambda$, depending on the location of $\mbf\sigma$ on the yield
surface. Since the stress tensor $\mbf\sigma$ is unknown in CIVP one must
blindly guess its right location. Moreover, for each tested location, one must
usually solve an auxilliary system of nonlinear equations whose solvability is
not guaranteed in general. \emph{These facts are evident drawbacks of the
current return-mapping schemes.} 

In associative plasticity, it is well-known that the plastic flow rule
(\ref{eqn_flow_rule}) together with a hardening law and loading/unloading
conditions can be equivalently replaced by the principle of maximum plastic
dissipation within the constitutive model. This alternative formulation of CIVP
does not require special treatment for nonsmooth $g$ and enables
to solve CIVP by techniques based on mathematical programming \cite{CM91, FZ88,
RM91}. In particular, if the implicit or trapezoidal Euler method is used then
the incremental constitutive problem can be interpreted by a certain kind of the
closest-point projection \cite{AP02, PA02, Sy14}. For some nonassociative
models, CIVP can be re-formulated using a theory of bipotentials that leads to
new numerical schemes \cite{S95, HFS02, Z09}. These alternative
definitions of the flow rule enable a
variational re-formulation of the initial boundary value elastoplastic problem.
Consequently, solvability of this problem can be investigated (see, e.g.,
\cite{HR99,MR15}). Therefore, the corresponding numerical techniques are usually
also correct from the mathematical point of view. On the other hand,
such a numerical treatment is not so universal and its implementation is more
involved/too complex in comparison with standard procedures of computational
inelasticity.

The approach pursued in
this paper builds on the subdifferental formulation of the plastic flow rule,
e.g. \cite[Section 6.3.9]{NPO08},
\begin{equation}
\dot{\mbf{\varepsilon}}^p\in\dot\lambda\partial_{\sigma} g(\mbf\sigma,A)
\label{inclusion_flow_rule}
\end{equation}
for nonsmooth $g$. Here, $\partial_\sigma g(\mbf\sigma,A)$ denotes the subdifferential of $g$ at
$(\mbf\sigma,A)$ with respect to the stress variable. If $g$ is convex at least
in vicinity of the yield surface then this definition is justified, e.g., by
\cite[Corollary 23.7.1]{Ro70} and is valid even when $g$
is not smooth at $\mbf\sigma$. On the first sight, it seems
that (\ref{inclusion_flow_rule}) is not convenient for numerical treatment due
to the presence of the multivalued flow direction. The main goal of this paper
is to show that the \emph{opposite is true}, by demonstrating that the
implicit return-mapping scheme based on (\ref{inclusion_flow_rule}) leads to
solving a just one system of nonlinear equations regardless whether the
unknown stress tensor lies on the smooth portion of the yield surface or not at least for a
wide class of models with nonsmooth plastic pseudo-potentials. Using this technique, we
eliminate the blind guessing and thus considerably
simplify the solution scheme. Moreover, the new technique enables to investigate some useful properties of the constitutive operator, like uniqueness or semismoothness, that are not obvious for the current technique.

\subsection{Basic idea}

First of all, we illustrate the new technique on a simple 2D projective problem
that mimics the structure of an incremental elastoplastic constitutive
problem. Consider the convex set $$B:=\{\mbf w=(w_1,w_2)\in\mathbb R^2\ |\;
f(\mbf w)\leq0\},\quad f(\mbf w):=w_1+|w_2|-1,$$ and define the projection $\mbf
w^*\in B$ of a point $\mbf z=(z_1,z_2)\in\mathbb R^2$ as follows: $$\|\mbf
z-\mbf w^*\|^2=\min_{\mbf w\in B}\|\mbf z-\mbf w\|^2,\quad \|\mbf
z\|^2:=z_1^2+z_2^2.$$ The scheme of the projection is depicted in Figure
\ref{fig_projection}.

\begin{figure}[htbp]
        \begin{center}
          \includegraphics[width=0.3\textwidth]{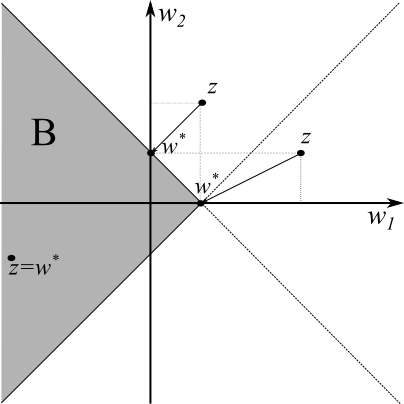}
        \end{center}
\caption{Scheme of the projection.}
\label{fig_projection}
\end{figure}

Clearly, the function $f$ is convex in $\mathbb R^2$, nondifferentiable at $\mbf w=(w_1,0)$ and 
\begin{equation}
\nabla f(\mbf w)=\left(
1,\ \frac{w_2}{|w_2|}
\right)^T\in\mathbb R^2\quad\forall \mbf w=(w_1,w_2),\; w_2\neq0.
\label{motivation0}
\end{equation}
If $\mbf z\in B$ then $\mbf w^*=\Pi_B(\mbf z)=\mbf z$. Conversely, if $\mbf
z\not\in B$ it follows from the Karush-Kuhn-Tucker conditions and
(\ref{motivation0}) that the projective problem can be written 
as follows: \textit{find $\mbf w^*=(w_1^*,w_2^*)^T\in\mathbb R^2$ and
the Lagrange multiplier $\lambda>0$:
\begin{equation}
z_1-w_1^*=\lambda,\quad z_2-w_2^*\in\lambda\partial|w_2^*|,\quad w_1^*+|w_2^*|-1=0,
\label{motivationb2}
\end{equation}
where}
$$\partial|w_2^*|=\left\{
\begin{array}{l l}
\{w_2^*/|w_2^*|\},& w_2^*\neq0,\\[0pt]
[-1,1], & w_2^*=0,
\end{array}
\right.$$
To find a solution to (\ref{motivationb2}),  it is crucial to rewrite the
inclusion (\ref{motivationb2})$_2$ as an equation. Observe that
$$z_2-w_2^*\in\lambda\partial|w_2^*|\quad\mbox{if and only if}\quad w_2^*=(|z_2|-\lambda)^+\frac{z_2}{|z_2|},$$ where $(\cdot)^+$ denotes a positive part of a function. This crucial transformation will be derived in detail in Section \ref{sec_DP} on an analogous elastoplastic example.
Thus (\ref{motivationb2}) leads to the following system of equations:
\begin{equation}
w_1^*=z_1-\lambda,\quad w_2^*=(|z_2|-\lambda)^+\frac{z_2}{|z_2|},\quad w_1^*+|w_2^*|-1=0,
\label{motivationb3}
\end{equation}
Since (\ref{motivationb3})$_2$ implies $|w_2^*|=(|z_2|-\lambda)^+$, the system
of three nonlinear equations reduces to a single one
$$z_1-\lambda+(|z_2|-\lambda)^+-1=0.$$
Consequently, $\lambda$ can be found in the closed form as
$$\lambda=z_1-1+\frac{1}{2}(-z_1+|z_2|+1)^+=\left\{
\begin{array}{c l}
\frac{1}{2}(z_1+|z_2|-1),& z_1-|z_2|-1\leq0,\\
z_1-1,& z_1-|z_2|-1\geq0
\end{array}
\right.$$
from which one can easily compute $\mbf w^*=(w_1^*,w_2^*)^T$ by
(\ref{motivationb3})$_1$ and (\ref{motivationb3})$_2$.

\subsection{Content of the the paper}

The presented idea is systematically extended on some elastoplastic models. This paper, PART I, is focused on isotropic models containing: $a)$ yield surfaces with one or two apices (singular points) laying on the hydrostatic axis; $b)$ plastic pseudo-potentials that are independent of the Lode angle; $c)$ nonlinear isotropic hardening (optionally). Such models are usually formulated by
the Haigh-Westergaard coordinates. Further, the implicit Euler discretization of CIVP is considered and thus two types of return on the yield surface within the plastic correction are distinguished: $(i)$ return to the smooth portion of the yield surface; $(ii)$ return to the apex (apices).

The paper is organized as follows. Section \ref{sec_preliminaries} contains some preliminaries related to invariants of the stress tensor and semismooth functions. Section
\ref{sec_DP} is devoted to the Drucker-Prager model including the nonlinear isotropic hardening. Although the plastic corrector cannot be found in closed form, the new technique enables to   a priori decide about the return type and prove existence, uniqueness and semismoothness of the implicit constitutive operator. The consistent tangent operator is also introduced. In Section \ref{sec_JG}, we derive similar results for the perfect plastic part of the Jir\'asek-Grassl model \cite{GJ06}.
In Section \ref{sec_general}, the new technique is extended on an abstract model written by the Haigh-Westergaard coordinates. In particular, within the plastic correction, we formulate a unique system of nonlinear equations which is common for the both type of the return. It can lead to a more correct and/or simpler solution scheme in comparison with the current technique. Section \ref{sec_realization} is devoted to numerical realization of the incremental boundary value elastoplastic problem using the semismooth Newton method. In Section \ref{sec_experiments},  illustrative numerical examples related to a slope stability benchmark are considered. Here, limit load is analyzed by an incremental method depending on a mesh type and mesh density for the Drucker-Prager and Jir\'asek-Grassl models.

Within this paper, second order tensors, matrices and vectors are denoted by
bold letters. As usual, small letters are used for vectors and capitals for
matrices (see Section \ref{sec_realization}). Further, the fourth order tensors
are denoted by capital blackboard letters, e.g., $\mathbb D_e$ or $\mathbb
I_{dev}$. The symbol $\otimes$ means the tensor product \cite{NPO08, G81}. We also use the following
notation: $\mathbb R_+:=\{z\in\mathbb R;\; z\geq0\}$ and $\mathbb R^{3\times
3}_{sym}$ for the space of symmetric, second order tensors.

%********************** Section 2**********************************************************************

\section{Preliminaries}
\label{sec_preliminaries}

\subsection{Invariants of a stress tensor and their derivatives}
\label{sec_invariants}

Consider a stress tensor $\mbf\sigma\in\mathbb R^{3\times 3}_{sym}$ and its splitting into the volumetric and deviatoric parts:
$$\mbf\sigma=p\mbf I+\mbf s,\quad p:=p(\mbf\sigma)=\frac{1}{3}\mbf I:\mbf\sigma,\;\;\mbf s:= s(\mbf\sigma)=\mathbb I_{dev}:\mbf \sigma=\mbf\sigma-\frac{1}{3}(\mbf I:\mbf\sigma)\mbf I.$$
Here, $\mbf I$, $\mathbb I_{dev}$, $p$ and $\mbf s$ denote the identity second order tensor, the fourth order deviatoric projection tensor, the hydrostatic pressure, and the deviatoric stress, respectively.  {\it The Haigh-Westergaard coordinates} are created by the invariants $p$, $\varrho$ and $\theta$, where
$$\varrho:=\varrho(\mbf\sigma)=\sqrt{\mbf s:\mbf s}=\|\mbf s\|,$$
\begin{equation*}
\theta:=\theta(\mbf\sigma)=\frac{1}{3}\arccos\left(\frac{3\sqrt{3}}{2}\frac{J_3}{J_2^{3/2}}\right),\;\; J_2(\mbf s)=\frac{1}{2}\mbf s:\mbf s=\frac{1}{2}\varrho^2,\;\; J_3(\mbf s)=\frac{1}{3}\mbf s^3:\mbf I,\;\; \mbf s:=\mbf s(\mbf\sigma).
\label{theta}
\end{equation*}
Clearly, $\varrho\geq0$ and $\theta\in[0,\pi/3]$. Since $\theta$ is not well-defined when $\varrho=0$, the Lode angle is included in elastoplastic models indirectly. Usually, it is considered another invariant in the form
\begin{equation}
\tilde\varrho:=\tilde\varrho(\mbf\sigma)=\varrho\tilde r(\cos\theta),
\label{invariant_abstract}
\end{equation}
where $\tilde r(\cdot)$ is assumed to be a smooth function such that $\tilde\varrho(\cdot)$ is at least continuous. However, as we will see below, it is more convenient to assume {\it strong semismoothness} of $\tilde\varrho(\cdot)$. As a particular case, it will be considered the invariant
\begin{equation}
\varrho_e:=\varrho_e(\mbf\sigma)=\varrho r_e(\cos\theta),
\label{invariant_e}
\end{equation}
where
\begin{equation}
r_e(\cos\theta)=\frac{4(1-e^2)\cos^2\theta+(2e-1)^2}{2(1-e^2)\cos\theta+(2e-1)\sqrt{4(1-e^2)\cos^2\theta+5e^2-4e}}.
\label{r_e}
\end{equation}
The function $r_e$ was proposed in \cite{WW74} and contains the excentricity parameter $e\in[0.5,1]$.  It holds: $a)$ $r_e(\cos\theta(.))$ is a bounded and smooth function for any $\mbf\sigma\in\mathbb R^{3\times 3}_{sym}$, $\varrho(\mbf\sigma)>0$; $b)$ $\varrho_e(\mbf\sigma)=0$ when $\varrho=0$; $c)$ $\varrho_e=\varrho$, $r_e(\cos\theta)=1$ when $e=1$.

We will also use the following derivatives:
\begin{equation}
\frac{\partial p}{\partial\mbf\sigma}=\frac{\mbf I}{3},\quad
\frac{\partial \mbf s}{\partial\mbf\sigma}=\mathbb I_{dev},\quad
\mbf{n}(\mbf\sigma):=\frac{\partial \varrho}{\partial\mbf\sigma}=\frac{\mbf s}{\varrho},\quad
\frac{\partial\mbf{n}}{\partial\mbf\sigma}=\frac{1}{\varrho}\left(\mathbb I_{dev}-\mbf{n}\otimes\mbf{n}\right),
\label{inv_der1}
\end{equation}
\begin{equation}
\frac{\partial \theta}{\partial\mbf\sigma}=\frac{\sqrt{6}}{\varrho\sin(3\theta)}\left[(\mbf{n}\otimes\mbf{n}^3)\mbf I-\mathbb I_{dev}(\mbf{n}^2)\right],\quad \frac{\partial r_e}{\partial\mbf\sigma}=-r_e'(\cos\theta)\sin\theta\frac{\partial \theta}{\partial\mbf\sigma}.
\label{inv_der2}
\end{equation}
Notice that the derivatives of $\varrho$, $\mbf n$, $\theta$ and $r_e$ do not exist when $\varrho=0$. Further, $\theta$ is not differentiable when $\mbf\sigma$ satisfies either $\theta=0$ or $\theta=\pi/3$. On the other hand, $r_e$ has derivatives for such stresses \cite{WW74}.

For purposes of this paper, it is crucial to derive the subdifferential of $\varrho$ at $\mbf\sigma$ when $\varrho(\mbf\sigma)=0$:
\begin{eqnarray}
\partial\varrho(\mbf\sigma)&=&\{\hat{\mbf n}\in\mathbb R^{3\times3}_{sym}\ |\;\;\varrho(\mbf \tau)\geq\varrho(\mbf\sigma)+\hat{\mbf n}:(\mbf \tau-\mbf\sigma)\;\;\forall\mbf\tau\in \mathbb R^{3\times3}_{sym}\}\nonumber\\
&=&\{\hat{\mbf n}\in\mathbb R^{3\times3}_{sym}\ |\;\;\|\mbf s(\mbf \tau)\|\geq(\hat{\mbf n}:\mbf I)(p(\mbf \tau)-p(\mbf\sigma))+\hat{\mbf n}: \mbf s(\mbf \tau)\;\;\forall\mbf\tau\in \mathbb R^{3\times3}_{sym}\}\nonumber\\
&=&\{\hat{\mbf n}\in\mathbb R^{3\times3}_{sym}\ |\;\;\mbf I:\hat{\mbf n}=0,\;\;\| \mbf s(\mbf \tau)\|\geq\hat{\mbf n}:\mbf  s(\mbf \tau)\;\;\forall\mbf\tau\in \mathbb R^{3\times3}_{sym}\}\nonumber\\
&=&\{\hat{\mbf n}\in \mathbb R^{3\times3}_{sym}\ |\;\; \mbf I:\hat{\mbf n}=0,\;\;\|\hat{\mbf n}\|\leq1\}\quad\mbox{if }\varrho(\mbf\sigma)=0.
\label{rho_subgrad}
\end{eqnarray}
If $\varrho(\mbf\sigma)>0$ then $\partial\varrho(\mbf\sigma)=\{\mbf n(\mbf\sigma)\}$ by (\ref{inv_der1})$_3$.
It is readily seen that 
\begin{equation}
\mbf I:\hat{\mbf n}=0\quad\forall \hat{\mbf n}\in\partial\varrho(\mbf\sigma),
\label{rho_subgrad_prop}
\end{equation}
regardless $\varrho(\mbf\sigma)= 0$ or not. 

\subsection{Semismooth functions}

\label{sub_semi}

Semismoothness was originally introduced by Mifflin \cite{Mi77} for functionals. Qi and J. Sun \cite{QS93} extended the definition of semismoothness to vector-valued functions to investigate the superlinear convergence of the Newton method. We introduce a definition of {\it strongly semismooth} functions \cite{Su01,Go04,PSS03}. To this end, consider finite dimensional spaces $X$ and $Y$ with the norms $\|.\|_X$ and $\|.\|_Y$, respectively. In the context of this paper, the abstract spaces $X,Y$ represent either subspaces of $\mathbb R^n$ or the space $\mathbb R^{3\times 3}_{sym}$.
\begin{defi}
Let $F:\ X\rightarrow Y$ be locally Lipschitz function in a neighborhood of some $x\in X$ and $\partial F(x)$ denote the generalized Jacobian in the Clarke sense \cite{Cl83}. We say that $F$ is strongly semismooth at $x$ if 
\begin{enumerate}
\item[$(i)$] $F$ is directionally differentiable at $x$,
\item[$(ii)$] for any $h\in X$, $h\rightarrow0$, and any $V\in\partial F(x+h)$,
\begin{equation}
F(x+h)-F(x)-Vh=O(\|h\|^2_X).
\label{semi_1}
\end{equation} 
\end{enumerate}
\label{def_semi}
\end{defi}
Notice that the estimate (\ref{semi_1}) is called the quadratic approximate property in \cite{Su01} or the strong $G$-semismoothness in \cite{Go04,PSS03}. In literature, there exists several equivalent definitions of strongly semismooth functions, see  \cite{QS93, Su01}. For example the condition (\ref{semi_1}) can be replaced with
\begin{equation}
F'(x+h;h)-F'(x;h)=O(\|h\|^2_X) \quad\forall h\in X,\;h\rightarrow0,\;x+h\in \mathcal D_F,
\label{semi_3}
\end{equation}
where $\mathcal D_F$ is the subset of $X$,  $F$ is Fr\'echet differentiable and $F'(x;h)$ denotes the directional derivative of $F$ at a point $x$ and a direction $h$.
We say that $F:\ X\rightarrow Y$ is strongly semismooth on an open set $\mathcal O\subset X$ if $F$ is strongly semismooth at every point of $\mathcal O$.

Since it is difficult to straightforwardly prove (\ref{semi_1}) or (\ref{semi_3}), we summarize several auxilliary results. Firstly, piecewise smooth ($PC^1$) functions with Lipschitz continuous derivatives of selected functions belong among strongly semismooth functions \cite{FP03}. Especially, we mention the max-function, $\xi: \ \mathbb R\rightarrow\mathbb R$, $\xi(s)=\max\{0,s\}=s^+$. Further, scalar product, sum, compositions of strongly semismooth functions are strongly semismooth. Finally, we will use the following version of the implicit function theorem \cite{Cl83, Su01, MSZ05}.
\begin{theorem}
Let $\mathcal  I:\ Y\times X\rightarrow X$ be a locally Lipschitz function in a neighborhood of $(\bar y, \bar x)$, which solve $\mathcal  I(\bar y, \bar x)=0$. Let $\partial_x \mathcal  I(y,x)$ 
denote the generalized derivatives of $\mathcal  I$ at $(y,x)$ with respect to the variables $x$.
If $\partial_x \mathcal  I(\bar y,\bar x)$ is of maximal rank, i.e. the following implication holds,
\begin{equation}
\mathcal  I^o_x\triangle x=0, \;\mathcal  I^o_x\in\partial_x \mathcal  I(\bar y,\bar x)\quad\Longrightarrow\quad \triangle x=0,
\label{full_rank}
\end{equation} 
then there exists an open neighborhood $\mathcal  O_{\bar y}$ of $\bar y$ and a function $F: O_{\bar y}\rightarrow X$ such that $F$ is locally Lipschitz continuous in $\mathcal  O_{\bar y}$, $F(\bar y)=\bar x$ and for every $y$ in $\mathcal  O_{\bar y}$, $\mathcal  I(y,F(y))=0$.

Moreover, if $\mathcal  I$ is strongly semismooth at $(\bar y, \bar x)$, then $F$ strongly semismooth at $\bar y$.
\label{implicit}
\end{theorem}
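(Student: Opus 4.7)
The plan is to derive the theorem by a Lipschitz/semismooth version of the classical implicit function argument, splitting the claim into (i) existence and local Lipschitz continuity of $F$ and (ii) propagation of directional differentiability and of the quadratic residue estimate from $\mathcal I$ to $F$.

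For (i), I would augment the equation by introducing the map $G:\ Y\times X\rightarrow Y\times X$, $G(y,x):=(y,\mathcal I(y,x))$. A standard calculation with Clarke's calculus shows that every element of the generalized Jacobian $\partial G(\bar y,\bar x)$ is of the block lower-triangular form $\left[\begin{smallmatrix} I_Y & 0 \\ \mathcal I^o_y & \mathcal I^o_x \end{smallmatrix}\right]$ with $\mathcal I^o_x\in\partial_x\mathcal I(\bar y,\bar x)$, so the maximal rank assumption (\ref{full_rank}) implies that every element of $\partial G(\bar y,\bar x)$ is nonsingular. Clarke's inverse function theorem then yields a locally Lipschitz inverse of $G$ on a neighborhood of $(\bar y,\bar x)$, from whose second component one reads off a locally Lipschitz $F:\mathcal O_{\bar y}\rightarrow X$ with $F(\bar y)=\bar x$ and $\mathcal I(y,F(y))=0$ for all $y\in\mathcal O_{\bar y}$.

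For (ii), fix a small $h\in Y$, set $k:=F(\bar y+h)-F(\bar y)$, and note $\|k\|_X\leq L\|h\|_Y$ by the Lipschitz estimate from (i). Pick any $V=[V_y,V_x]\in\partial\mathcal I(\bar y+h,\bar x+k)$. Since $\mathcal I(\bar y+h,\bar x+k)=\mathcal I(\bar y,\bar x)=0$, strong semismoothness (\ref{semi_1}) for $\mathcal I$ gives
\begin{equation*}
0=\mathcal I(\bar y+h,\bar x+k)-\mathcal I(\bar y,\bar x)=V_y h+V_x k+O(\|h\|_Y^2+\|k\|_X^2).
\end{equation*}
Upper semicontinuity of the Clarke Jacobian combined with openness of the set of invertible linear maps ensures that, shrinking $\mathcal O_{\bar y}$ if necessary, every such $V_x$ is invertible with uniformly bounded inverse. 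Solving for $k$ yields $k=-V_x^{-1}V_y\,h+O(\|h\|_Y^2)$, which is precisely the quadratic approximation (\ref{semi_1}) for $F$ at $\bar y$ with $-V_x^{-1}V_y\in\partial F(\bar y+h)$; the latter membership is justified by the same block-triangular structure used in (i) applied to the chain rule for generalized Jacobians of composite Lipschitz maps. Directional differentiability of $F$ at $\bar y$ follows by an analogous solvability argument applied to the directional derivative $\mathcal I'((\bar y,\bar x);(h,k))$, which exists by assumption and is positively homogeneous in $(h,k)$.

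The main obstacle I expect is the subtle step in (ii) of certifying that the matrix $-V_x^{-1}V_y$ produced by the solving argument really lies in $\partial F(\bar y+h)$, so that the resulting estimate matches Definition \ref{def_semi} literally rather than just producing a quadratic residue along a particular sequence. This requires a careful use of the chain rule for Clarke Jacobians of implicitly defined maps, together with the uniform nonsingularity of $V_x$ on a whole neighborhood; the routine bookkeeping of constants and neighborhoods I would relegate to a reference such as \cite{Cl83,Su01,MSZ05}.
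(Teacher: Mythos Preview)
The paper does not prove Theorem \ref{implicit}; it is quoted as a known result from the literature with the citations \cite{Cl83,Su01,MSZ05} and used as a black box later on. So there is no ``paper's own proof'' to compare against.

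That said, your sketch is essentially the standard route taken in those references: reduce the Lipschitz implicit function statement to Clarke's Lipschitz inverse function theorem via the augmentation $G(y,x)=(y,\mathcal I(y,x))$, and then transfer the strong semismoothness estimate by expanding $0=\mathcal I(\bar y+h,\bar x+k)-\mathcal I(\bar y,\bar x)$ with a generalized Jacobian at the perturbed point and solving for $k$. Two technical points deserve a little more care than you give them. First, in part (i) you only get the inclusion $\partial G(\bar y,\bar x)\subset\left\{\left[\begin{smallmatrix} I_Y & 0\\ A & B\end{smallmatrix}\right]:\ (A,B)\in\partial\mathcal I(\bar y,\bar x)\right\}$, and the $(2,2)$ block $B$ lies in the projection $\pi_x\partial\mathcal I(\bar y,\bar x)\subset\partial_x\mathcal I(\bar y,\bar x)$; this inclusion is enough for nonsingularity, but you should state it as an inclusion, not an equality. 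Second, the point you flag yourself---that $-V_x^{-1}V_y$ actually belongs to $\partial F(\bar y+h)$---is genuinely the delicate step; in \cite{Su01} and \cite{Go04} it is handled by working at points of differentiability of $F$ (a full-measure set) and using that the classical Jacobian of $F$ there satisfies the desired relation, then invoking the definition of the Clarke Jacobian as closed convex hull of such limits. Your chain-rule justification as written is a bit loose, since Clarke's chain rule for composites gives only inclusions in general, so you would be better served by following the differentiability-point argument of \cite{Su01} directly.
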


The semismoothness of constitutive operators in elastoplasticity has been studied e.g. in \cite{B97, GrVa09, SaWi11, Sy09, Sy14, CKSV14}. Namely in \cite{SaWi11, Sy14}, one can find an abstract framework how to investigate it for operators in an implicit form. However, the framework cannot be straightforwardly used for models investigated in this paper. Therefore, we introduce the following auxilliary result.

\begin{proposition}
Let $p(\cdot)$, $\mbf s(\cdot)$, $\varrho(\cdot)$, $\theta(\cdot)$, $\mbf n(\cdot)$, $r_e(\cdot)$ and $\varrho_e(\cdot)$ be the functions introduced in Section \ref{sec_invariants}. Further, ler $\hat p, \hat\varrho:\mathbb R^{3\times 3}_{sym}\rightarrow\mathbb R$ be strongly semismooth functions and assume that $\hat \varrho$ vanishes for any $\mbf\sigma\in\mathbb R^{3\times 3}_{sym}$, $\varrho(\mbf\sigma)= 0$. Define,
$$
\hat\varrho_e(\mbf\sigma):=\left\{
\begin{array}{ll}
\hat\varrho(\mbf\sigma)r_e(\cos\theta(\mbf\sigma)), & \varrho(\mbf\sigma)\neq0,\\
0, & \varrho(\mbf\sigma)=0,
\end{array}\right.,\quad
\mbf S(\mbf\sigma):=\left\{
\begin{array}{ll}
\hat p(\mbf\sigma)\mbf I+\hat\varrho(\mbf\sigma)\mbf n(\mbf\sigma), & \varrho(\mbf\sigma)\neq0,\\
\hat p(\mbf\sigma)\mbf I, & \varrho(\mbf\sigma)=0.
\end{array}\right.
$$
Then the functions $\varrho$, $\varrho_e$, $\hat\varrho_e$ and $\mbf S$ are strongly semismooth in $\mathbb R^{3\times 3}_{sym}$.
\label{prop_semi}
\end{proposition}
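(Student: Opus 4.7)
The plan is to prove the four assertions in order ($\varrho$ first, then $\varrho_e$, $\hat\varrho_e$, and $\mbf S$), splitting each into the easy case $\varrho(\bar{\mbf\sigma})>0$ and the substantive case $\varrho(\bar{\mbf\sigma})=0$. For a base point $\bar{\mbf\sigma}$ with $\varrho(\bar{\mbf\sigma})>0$, the maps $\mbf s,p,\varrho,\theta,\mbf n$ and $r_e\circ\cos\circ\theta$ are all $C^\infty$ in a neighborhood (the last by property $(a)$ of $r_e$), so the four target functions are sums/products/compositions of $C^\infty$ maps with the strongly semismooth $\hat p,\hat\varrho$; strong semismoothness then follows from the composition rules recalled after Definition \ref{def_semi}, and directional differentiability is evident.

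For the substantive case, I would first dispatch $\varrho$ by writing $\varrho=\|\cdot\|\circ\mbf s$: the Euclidean norm on $\mathbb R^{3\times3}_{sym}$ is strongly semismooth at $\mathbf{0}$ because for $\mbf h\neq\mathbf{0}$ the gradient is $\mbf h/\|\mbf h\|$ and $\|\mbf h\|-(\mbf h/\|\mbf h\|):\mbf h=0$, trivially $O(\|\mbf h\|^2)$; composition with the linear projector $\mbf s$ preserves strong semismoothness. The pivotal identity for the remaining three functions is
\begin{equation}
\mbf n(\mbf\sigma):\mbf h=\varrho(\mbf\sigma),\qquad\bigl(\mathbb I_{dev}-\mbf n(\mbf\sigma)\otimes\mbf n(\mbf\sigma)\bigr)\mbf h=\mathbf{0},
\label{key_id}
\end{equation}
valid at any $\mbf\sigma=\bar{\mbf\sigma}+\mbf h$ with $\varrho(\bar{\mbf\sigma})=0$ and $\varrho(\mbf\sigma)>0$. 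It follows at once from $\mbf s(\mbf\sigma)=\mbf s(\mbf h)$, the resulting formula $\mbf n(\mbf\sigma)=\mbf s(\mbf h)/\varrho(\mbf\sigma)$, and $\mbf I:\mbf n=0$. Inserting (\ref{key_id}) into formulas (\ref{inv_der1})--(\ref{inv_der2}) exactly cancels the $1/\varrho$ singularities, so that $(\partial\mbf n/\partial\mbf\sigma)\mbf h=\mathbf{0}$, $(\partial\theta/\partial\mbf\sigma):\mbf h=0$, and therefore $\nabla(r_e\circ\cos\circ\theta)(\mbf\sigma):\mbf h=0$.

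Once (\ref{key_id}) is in place, I would verify the equivalent criterion (\ref{semi_3}) (restricted to differentiability points $\mbf\sigma$) for each remaining function. For $\varrho_e=\varrho\cdot(r_e\circ\cos\circ\theta)$ the product rule and (\ref{key_id}) give $\nabla\varrho_e(\mbf\sigma):\mbf h=r_e\cdot(\mbf n:\mbf h)+\varrho\cdot\nabla(r_e\circ\theta):\mbf h=r_e\varrho=\varrho_e(\mbf\sigma)$, so after subtracting $\varrho_e(\bar{\mbf\sigma})=0$ the error is identically zero. For $\hat\varrho_e$ the same computation applies, combined with the estimate $\hat\varrho(\mbf\sigma)-\nabla\hat\varrho(\mbf\sigma):\mbf h=O(\|\mbf h\|^2)$, which is strong semismoothness of $\hat\varrho$ at $\bar{\mbf\sigma}$ together with $\hat\varrho(\bar{\mbf\sigma})=0$. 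For $\mbf S$, the Jacobian at a differentiability point is $V=\mbf I\otimes\nabla\hat p+\mbf n\otimes\nabla\hat\varrho+\hat\varrho\,(\partial\mbf n/\partial\mbf\sigma)$; the last term applied to $\mbf h$ vanishes by (\ref{key_id}), leaving
$$\mbf S(\mbf\sigma)-\mbf S(\bar{\mbf\sigma})-V\mbf h=\bigl[\hat p(\mbf\sigma)-\hat p(\bar{\mbf\sigma})-\nabla\hat p(\mbf\sigma):\mbf h\bigr]\mbf I+\bigl[\hat\varrho(\mbf\sigma)-\nabla\hat\varrho(\mbf\sigma):\mbf h\bigr]\mbf n(\mbf\sigma),$$
which is $O(\|\mbf h\|^2)$ by strong semismoothness of $\hat p$ and $\hat\varrho$ at $\bar{\mbf\sigma}$ and boundedness of $\mbf n$. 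Directional differentiability of $\varrho_e,\hat\varrho_e,\mbf S$ at a hydrostatic base point follows by direct computation, using that $r_e\circ\cos\circ\theta$ is constant along every ray emanating from $\bar{\mbf\sigma}$.

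The main obstacle is recognizing the identity (\ref{key_id}): without it, the classical gradients of $\mbf n$ and $r_e\circ\theta$ blow up like $1/\varrho$ as $\mbf\sigma\to\bar{\mbf\sigma}$, the remainder is only $O(\|\mbf h\|)$, and the quadratic estimate required for strong semismoothness fails. The identity (\ref{key_id}) is specific to base points on the hydrostatic axis and reflects the cancellation $\mathbb I_{dev}\mbf h=\mbf s(\mbf h)=\mbf n(\mbf\sigma)(\mbf n(\mbf\sigma):\mbf h)$ produced by $\mbf s(\bar{\mbf\sigma})=\mathbf{0}$.
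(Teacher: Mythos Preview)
Your proof is correct and rests on the same geometric fact as the paper's: when $\varrho(\bar{\mbf\sigma})=0$, the deviatoric direction $\mbf n$ and the Lode angle $\theta$ are constant along every ray $\bar{\mbf\sigma}+\epsilon\mbf\tau$, $\epsilon>0$. The paper states this directly as $\mbf n(\bar{\mbf\sigma}+\epsilon\mbf\tau)=\mbf n(\mbf\tau)$, $\theta(\bar{\mbf\sigma}+\epsilon\mbf\tau)=\theta(\mbf\tau)$ and then verifies criterion (\ref{semi_3}) in one line for each function, whereas you express the same fact in its differential form (\ref{key_id}) and feed it through the explicit derivative formulas (\ref{inv_der1})--(\ref{inv_der2}); this buys you nothing extra and makes the bookkeeping a bit heavier, but the substance is identical. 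One small point: you invoke ``criterion (\ref{semi_3})'' but then actually compute $\mbf S(\mbf\sigma)-\mbf S(\bar{\mbf\sigma})-V\mbf h$, which is criterion (\ref{semi_1}); the two are equivalent here, but keep them straight in the write-up.
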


\begin{proof}
Since the functions $\mbf n(\cdot)$ and $r_e(\cos\theta(\cdot))$ are bounded and have Lipschitz continuous derivatives in $\{\mbf\sigma\in\mathbb R^{3\times 3}_{sym}\ |\; \varrho(\mbf\sigma)\neq0\}$,  it is easy to see that the functions $\varrho$, $\varrho_e$, $\hat\varrho_e$ and $\mbf S$ are locally Lipschitz continuous in $\mathbb R^{3\times 3}_{sym}$ and strongly semismooth for any $\mbf\sigma\in\mathbb R^{3\times 3}_{sym}$, $\varrho(\mbf\sigma)\neq0$. 
Therefore, it remains to show strong semismoothness at $\mbf\sigma\in\mathbb R^{3\times 3}_{sym}$, $\varrho(\mbf\sigma)= 0$. To this end, we show that (\ref{semi_3}) holds for $\varrho$, $\varrho_e$, $\hat\varrho_e$ and $\mbf S$ at such $\mbf\sigma$. Let $\mbf\tau\in\mathbb R^{3\times 3}_{sym}$ be such that $\varrho(\mbf\tau)\neq0$. Then 
$$\varrho(\mbf\sigma+\epsilon\mbf\tau)=\epsilon\varrho(\mbf\tau), \;\;
\mbf n(\mbf\sigma+\epsilon\mbf\tau)=\mbf n(\mbf\tau), \;\; \theta(\mbf\sigma+\epsilon\mbf\tau)=\theta(\mbf\tau) \quad \forall \epsilon>0.$$
Hence $\varrho'(\mbf\sigma+\mbf\tau;\mbf\tau)-\varrho'(\mbf\sigma;\mbf\tau)=0$, $\varrho'_e(\mbf\sigma+\mbf\tau;\mbf\tau)-\varrho'_e(\mbf\sigma;\mbf\tau)=0$ and 
\begin{eqnarray*}
\hat\varrho'_e(\mbf\sigma+\mbf\tau;\mbf\tau)-\hat\varrho'_e(\mbf\sigma;\mbf\tau)&=&[\hat\varrho'(\mbf\sigma+\mbf\tau;\mbf\tau)-\hat\varrho'(\mbf\sigma;\mbf\tau)]\mbf r_e(\cos\theta(\mbf\tau)){=}O(\|\mbf\tau\|^2),\\
\mbf S'(\mbf\sigma+\mbf\tau;\mbf\tau)-\mbf S'(\mbf\sigma;\mbf\tau)&=&[\hat p'(\mbf\sigma+\mbf\tau;\mbf\tau)-\hat p'(\mbf\sigma;\mbf\tau)]\mbf I+[\hat\varrho'(\mbf\sigma+\mbf\tau;\mbf\tau)-\hat\varrho'(\mbf\sigma;\mbf\tau)]\mbf n(\mbf\tau){=}O(\|\mbf\tau\|^2),
\end{eqnarray*}
since the functions $\hat p$, $\hat\varrho$ satisfy (\ref{semi_3}) by the assumption.
\end{proof}
The function $\mbf S$ introduced in Proposition \ref{prop_semi} has the same scheme as a mapping between trial and unknown stress tensors for models introduced in Section \ref{sec_DP}--\ref{sec_general}. Here, the trial stress is represented by $\mbf\sigma$ and the unknown stress is in the form $\hat{\mbf\sigma}=\hat p(\mbf\sigma)\mbf I+\hat\varrho(\mbf\sigma)\mbf n(\mbf\sigma)$. Therefore, it is sufficient to prove only semismoothness of the scalar functions $\hat p$, $\hat\varrho$ representing invariants of the unknown stress tensor. The semismoothness of $\hat \varrho_e$ has been derived to prove Theorem \ref{th_semi_JG}.

%********************** Section 3**********************************************************************

\section{The Drucker-Prager model}
\label{sec_DP}

\subsection{Constitutive initial value problem}
\label{sec_model}

We consider the  elastoplastic problem containing the Drucker-Prager criterion,
a nonassociative plastic flow rule and a nonlinear isotropic hardening. Within a
thermodynamical framework with internal variables, we introduce the
corresponding constitutive initial value problem, see \cite{NPO08}:

\begin{enumerate}
\item {\it Additive decomposition of the infinitesimal strain tensor
$\mbf\varepsilon$ on elastic  and plastic  parts}:
\begin{equation}
\mbf\varepsilon=\mbf{\varepsilon}^e+\mbf{\varepsilon}^p.
\label{split}
\end{equation}
\item {\it Linear isotropic elastic law between the stress and the elastic strain}:
\begin{equation}
\mbf\sigma=\mathbb D_e:\mbf{\varepsilon}^e={K}(\mbf I:\mbf{\varepsilon}^e)\mbf I+2G\mathbb I_{dev}:\mbf{\varepsilon}^e,\quad \mathbb D_e=K\mbf I\otimes\mbf I+2G\mathbb I_{dev},
\label{elastic_law}
\end{equation}
where $K,G>0$ denotes the bulk, and shear moduli, respectively.
 \item {\it Non-linear isotropic hardening}:
\begin{equation}
\kappa=H(\bar\varepsilon^p).
\label{hardening_law}
\end{equation}
Here $\bar\varepsilon^p\in\mathbb R_+$ denotes an isotropic (scalar) hardening
variable, $\kappa\in\mathbb R_+$ is the corresponding thermodynamical force and
$H: \mathbb R_+\rightarrow\mathbb R_+$ is a nondecreasing, strongly semismooth function satisfying $H(0)=0$. 
\item {\it  Drucker-Prager yield function}: 
\begin{equation}
f(\mbf\sigma,\kappa)=\hat f(p(\mbf\sigma),\varrho(\mbf\sigma),\kappa)=\sqrt{\frac{1}{2}}\varrho+\eta p-\xi (c_0+\kappa).
\label{yield_function}
\end{equation}
Here, the parameters $\eta,\xi>0$ are usually calculated from the friction angle using a sufficient approximation of the Mohr-Coulomb yield surface and $c_0>0$ denotes the initial cohesion. 
\item {\it Plastic pseudo-potential}.
\begin{equation}
g(\mbf\sigma)=\hat g(p(\mbf\sigma),\varrho(\mbf\sigma))=\sqrt{\frac{1}{2}}\varrho+\bar\eta p.
\label{potential_function}
\end{equation}
Here $\bar \eta>0$ denotes a parameter depending on the dilatancy angle.
\item {\it Nonassociative plastic flow rule} 
\begin{equation}
\dot{\mbf{\varepsilon}}^p\in\dot\lambda\partial g(\mbf\sigma),
\label{flow_rule}
\end{equation}
where $\dot\lambda\geq0$ is a multiplier and $\partial g(\mbf\sigma)$ denotes the subdifferential of the convex function $g$ at $\mbf\sigma$. Using (\ref{inv_der1}), (\ref{rho_subgrad}) and (\ref{potential_function}), the flow rule (\ref{flow_rule}) can be written as
\begin{equation}
\dot{\mbf{\varepsilon}}^p=\dot\lambda\left(\sqrt{\frac{1}{2}}\hat{\mbf n}+\frac{\bar\eta}{3}\mbf I\right),\quad \hat{\mbf n}\in\partial\varrho(\mbf\sigma).
\label{flow_rule2}
\end{equation}
Consequently by (\ref{split}), (\ref{elastic_law}) and (\ref{rho_subgrad_prop}),
\begin{equation}
\dot{\mbf{\sigma}}=\mathbb D_e:(\dot{\mbf{\varepsilon}}-\dot{\mbf{\varepsilon}}^p)=\mathbb D_e:\dot{\mbf{\varepsilon}}-\dot\lambda\left(G\sqrt{2}\hat{\mbf n}+K\bar\eta\mbf I\right).
\label{flow_rule3}
\end{equation}
\item {\it Associative hardening law:}
\begin{equation}
\dot{\bar{\varepsilon}}^p=-\dot\lambda\frac{\partial f(\mbf\sigma,\kappa)}{\partial\kappa}=\dot\lambda\xi.
\end{equation}
\item {\it Loading/unloding criterion:}
\begin{equation}
\dot\lambda\geq0,\quad f(\mbf\sigma,\kappa)\leq0,\quad \dot\lambda f(\mbf\sigma,\kappa)=0.
\label{loading_unloading}
\end{equation}
\end{enumerate}

Then the elastoplastic constitutive initial value problem reads as follows:
\textit{Given the history of the strain tensor $\mbf\varepsilon=\mbf\varepsilon(t)$, $t\in[t_0, t_{\max}]$, and the initial values
$\mbf{\varepsilon}^p(t_0)=\mbf{\varepsilon}^p, \; \bar\varepsilon^p(t_0)=\bar\varepsilon^p_0.$
Find $(\mbf\sigma(t),\mbf{\varepsilon}^p(t), \bar\varepsilon^p(t))$ such that 
\begin{equation}
\left.
\begin{array}{l}
\mbf\sigma=\mathbb D_e:(\mbf{\varepsilon}-\mbf{\varepsilon}^p),\\
\dot{\mbf{\sigma}}=\mathbb D_e:\dot{\mbf{\varepsilon}}-\dot\lambda\left(G\sqrt{2}\hat{\mbf n}+K\bar\eta\mbf I\right),\quad \hat{\mbf n}\in\partial\varrho(\mbf\sigma),\\
\dot{\bar{\varepsilon}}^p=\dot\lambda \xi,\\
\dot\lambda\geq0,\;\; f(\mbf{\sigma},H(\bar\varepsilon^p))\leq0,\;\; \dot\lambda f(\mbf{\sigma},H(\bar\varepsilon^p))=0.
\end{array}
\right\}
\label{CIVP_DP}
\end{equation}
hold for each instant $t\in[t_0,t_{\max}]$}.

\subsection{Implicit Euler discretization of CIVP}

We discretize CIVP using the implicit Euler method. To this end we assume a partition $$0=t_0<t_1<\ldots<t_k<\ldots<t_N=t_{\max}$$ of the pseudo-time interval and fix a step $k$. For the sake of brevity, we omit the index $k$ and write $\mbf{\sigma}:=\mbf{\sigma}(t_k)$, $\mbf\varepsilon:=\mbf\varepsilon(t_k)$, $\mbf\varepsilon^p:=\mbf\varepsilon^p(t_k)$ and $\bar\varepsilon^p=\bar\varepsilon^p(t_k)$. Further, we define the following trial variables: $\bar{\varepsilon}^{p,tr}:=\bar{\varepsilon}^p(t_{k-1})$, $\mbf\varepsilon^{e,tr}=\mbf\varepsilon(t_{k})-\mbf{\varepsilon}^p(t_{k-1})$ and $\mbf{\sigma}^{tr}:=\mathbb D_e:\mbf\varepsilon^{e,tr}$.
Then the discrete elastoplastic constitutive problem for the $k$-step reads as follows: \textit{Given $\mbf{\varepsilon}$, $\mbf{\sigma}^{tr}$ and $\bar{\varepsilon}^{p,tr}$. Find $\mbf{\sigma}$,  $\bar{\varepsilon}^p$ and $\triangle\lambda$ satisfying:}
\begin{equation}
\left.
\begin{array}{c}
\mbf{\sigma}=\mbf{\sigma}^{tr}-\triangle\lambda\left(G\sqrt{2}\hat{\mbf n}+K\bar\eta\mbf I\right),\quad \hat{\mbf n}\in\partial\varrho(\mbf{\sigma}),\\
\bar{\varepsilon}^p=\bar{\varepsilon}^{p,tr}+\triangle\lambda\xi,\\
\triangle\lambda\geq0,\quad f(\mbf{\sigma},H(\bar\varepsilon^p))\leq0,\quad \triangle\lambda f(\mbf{\sigma},H(\bar\varepsilon^p))=0.
\end{array}
\right\}
\label{k-step_problem}
\end{equation}
Notice that the remaining input parameter for the next step, $\mbf{\varepsilon}^p(t_k)$, can be computed using the formula $\mbf{\varepsilon}^p=\mbf{\varepsilon}-\mathbb
D_e^{-1}:\mbf{\sigma}$ after finding a solution to problem (\ref{k-step_problem}).

\subsection{Solution of the incremental problem}
\label{subsec_const_sol}

We standardly use the elastic predictor/plastic corrector method for solving (\ref{k-step_problem}).

\medskip\noindent
{\bf Elastic predictor} applies when
\begin{equation}
f(\mbf{\sigma}^{tr},H(\bar{\varepsilon}^{p,tr}))\leq0.
\label{trial_admissibility}
\end{equation}
Then the triplet
\begin{equation}
\mbf{\sigma}=\mbf{\sigma}^{tr},\quad \bar{\varepsilon}^p=\bar{\varepsilon}^{p,tr}, \quad \triangle\lambda=0
\end{equation}
 is the solution to (\ref{k-step_problem}).

\medskip\noindent
{\bf Plastic corrector} applies when (\ref{trial_admissibility}) does not hold. Then $\triangle\lambda>0$ and (\ref{k-step_problem}) reduces into
\begin{equation}
\left.
\begin{array}{c}
\mbf{\sigma}=\mbf{\sigma}^{tr}-\triangle\lambda\left(G\sqrt{2}\hat{\mbf n}+K\bar\eta\mbf I\right),\quad \hat{\mbf n}\in\partial\varrho(\mbf{\sigma}),\\
\bar{\varepsilon}^p=\bar{\varepsilon}^{p,tr}+\triangle\lambda\xi,\\
f(\mbf{\sigma},H(\bar\varepsilon^p))=0.
\end{array}
\right\}
\label{k-step_problem_plast}
\end{equation}
Since the functions $f$ and $g$ depend on $\mbf\sigma$ only through the variables $\varrho$ and $p$, it is natural to reduce a number of uknowns in problem (\ref{k-step_problem_plast}). To this end, we split  (\ref{k-step_problem_plast})$_1$ into the deviatoric and volumetric parts:
\begin{eqnarray}
\mbf{s}&=&\mbf{s}^{tr}-\triangle\lambda G\sqrt{2}\hat{\mbf n}, \quad \hat{\mbf n}\in\partial\varrho(\mbf{\sigma}), \label{flow_dev}\\
{p}&=&{p^{tr}}-\triangle\lambda K\bar\eta,
\end{eqnarray}
where $\mbf{s}^{tr}$, ${p^{tr}}$ denotes the deviatoric stress, and the hydrostatic stress related to $\mbf{\sigma}^{tr}$, respectively. Using (\ref{inv_der1})$_3$, the equality (\ref{flow_dev}) yields
\begin{equation}
\mbf{s}^{tr}=\left\{
\begin{array}{r c l}
\left(1+\frac{\triangle\lambda G\sqrt{2}}{\varrho}\right)\mbf{s}&\mbox{if}& \varrho> 0,\\[6pt]
\triangle\lambda G\sqrt{2}\hat{\mbf n}&\mbox{if}& \varrho= 0.
\end{array}\right.
\label{flow_s_k}
\end{equation}
Denote $\varrho^{tr}=\|\mbf{s}^{tr}\|$ and recall that $\|\hat{\mbf n}\|\leq1$ for $\varrho=0$ by (\ref{rho_subgrad}). Then from (\ref{flow_s_k}) we obtain
\begin{equation}
\left\{\begin{array}{r c l}
\varrho=\varrho^{tr}-\triangle\lambda G\sqrt{2} &\mbox{if}& \varrho> 0,\\
0\geq\varrho^{tr}-\triangle\lambda G\sqrt{2} &\mbox{if}& \varrho= 0.
\end{array}\right.
\label{flow_rho}
\end{equation}
Following the arguments developed in Section~1.1, we now rewrite
(\ref{flow_rho}) as follows:
\begin{equation}
\varrho=\left(\varrho^{tr}-\triangle\lambda G\sqrt{2}\right)^+=\max\left\{0;\; \varrho^{tr}-\triangle\lambda G\sqrt{2}\right\}.
\label{flow_rho_improve}
\end{equation}
Notice that (\ref{flow_rho}) and (\ref{flow_rho_improve}) are equivalent.
Further from (\ref{flow_s_k})$_1$, we standardly have:
\begin{equation}
\mbf{n}=\frac{\mbf{s}}{\varrho}=\frac{\mbf{s}^{tr}}{\varrho^{tr}}=\mbf{n}^{tr}\quad\mbox{if}\quad \varrho>0.
\end{equation}
The following theorem summarizes and completes the derived results.

\begin{theorem}
Let $f(\mbf{\sigma}^{tr},H(\bar{\varepsilon}^{p,tr}))>0$. If $(\mbf{\sigma},\bar{\varepsilon}^p,\triangle\lambda)$ is a solution to problem (\ref{k-step_problem_plast}) and $p=\mbf I: \mbf{\sigma}/3$, $\mbf{s}=\mathbb I_{dev}:\mbf{\sigma}$, $\varrho=\|\mbf{s}\|$, then $(p,\varrho,\bar{\varepsilon}^p,\triangle\lambda)$ is a solution to the following system:
\begin{equation}
\left.
\begin{array}{c}
{p}=p^{tr}-\triangle\lambda K\bar\eta,\\
\varrho=\left(\varrho^{tr}-\triangle\lambda G\sqrt{2}\right)^+,\\
\bar{\varepsilon}^p=\bar{\varepsilon}^{p,tr}+\triangle\lambda\xi,\\
\hat f(p,\varrho,H(\bar\varepsilon^p))=0.
\end{array}
\right\}
\label{k-step_problem_plast_red}
\end{equation}
Conversely, if $(p,\varrho,\bar{\varepsilon}^p,\triangle\lambda)$ is a solution to (\ref{k-step_problem_plast_red}) then $(\mbf{\sigma},\bar{\varepsilon}^p,\triangle\lambda)$ is the solution to (\ref{k-step_problem_plast}) where
\begin{equation}
\mbf{\sigma}=\left\{
\begin{array}{c c l}
\mbf{\sigma}^{tr}-\triangle\lambda\left(G\sqrt{2}\mbf{n}^{tr}+K\bar\eta\mbf I\right) &\mbox{if}& \varrho^{tr}>\triangle\lambda G\sqrt{2},\\
\left({p^{tr}}-\triangle\lambda K\bar\eta\right)\mbf I &\mbox{if}& \varrho^{tr}\leq\triangle\lambda G\sqrt{2}.\\
\end{array}
\right.
\end{equation}
\label{lem_auxilliary_problem}
\end{theorem}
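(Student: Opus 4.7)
The plan is to establish the two implications separately, since both reduce essentially to bookkeeping on the volumetric/deviatoric split combined with the description of $\partial\varrho(\mbf\sigma)$ derived in (\ref{rho_subgrad}).

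For the forward direction, suppose $(\mbf\sigma,\bar\varepsilon^p,\triangle\lambda)$ solves (\ref{k-step_problem_plast}). I would first take the trace of (\ref{k-step_problem_plast})$_1$ and divide by $3$; using (\ref{rho_subgrad_prop}) (i.e.\ $\mbf I:\hat{\mbf n}=0$ for every $\hat{\mbf n}\in\partial\varrho(\mbf\sigma)$), this immediately yields $p=p^{tr}-\triangle\lambda K\bar\eta$. Applying $\mathbb I_{dev}$ to the same equation gives the deviatoric relation $\mbf s=\mbf s^{tr}-\triangle\lambda G\sqrt{2}\hat{\mbf n}$. Taking norms and splitting into the cases $\varrho>0$ (where $\hat{\mbf n}=\mbf s/\varrho$ is unit) and $\varrho=0$ (where $\|\hat{\mbf n}\|\le 1$ by (\ref{rho_subgrad})), one obtains (\ref{flow_rho}), which by the earlier observation is exactly the compact form $\varrho=(\varrho^{tr}-\triangle\lambda G\sqrt{2})^+$ in (\ref{flow_rho_improve}). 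The hardening relation and the yield equation $\hat f(p,\varrho,H(\bar\varepsilon^p))=0$ (which is just $f(\mbf\sigma,H(\bar\varepsilon^p))=0$ expressed through the invariants used to define $\hat f$) are inherited directly. Thus $(p,\varrho,\bar\varepsilon^p,\triangle\lambda)$ solves (\ref{k-step_problem_plast_red}).

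For the converse, given $(p,\varrho,\bar\varepsilon^p,\triangle\lambda)$ solving (\ref{k-step_problem_plast_red}) I would distinguish the two cases according to the sign of $\varrho^{tr}-\triangle\lambda G\sqrt{2}$. When $\varrho^{tr}>\triangle\lambda G\sqrt{2}$ we have $\varrho>0$ and the candidate $\mbf\sigma=\mbf\sigma^{tr}-\triangle\lambda(G\sqrt{2}\,\mbf n^{tr}+K\bar\eta\mbf I)$ gives $\mbf s=(1-\triangle\lambda G\sqrt{2}/\varrho^{tr})\mbf s^{tr}$, so $\mbf s/\|\mbf s\|=\mbf n^{tr}$ coincides with $\partial\varrho(\mbf\sigma)$; the identity (\ref{k-step_problem_plast})$_1$ then holds with $\hat{\mbf n}=\mbf n^{tr}$. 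When $\varrho^{tr}\le\triangle\lambda G\sqrt{2}$ we have $\varrho=0$, so $\mbf\sigma=(p^{tr}-\triangle\lambda K\bar\eta)\mbf I$ is purely volumetric; the subdifferential set (\ref{rho_subgrad}) is then the full unit deviatoric ball, and choosing $\hat{\mbf n}:=\mbf s^{tr}/(\triangle\lambda G\sqrt{2})$ lies in it since $\mbf I:\mbf s^{tr}=0$ and $\|\hat{\mbf n}\|=\varrho^{tr}/(\triangle\lambda G\sqrt{2})\le 1$, and it clearly satisfies (\ref{k-step_problem_plast})$_1$. The remaining conditions on $\bar\varepsilon^p$ and on the yield function are again just (\ref{k-step_problem_plast_red})$_3$, (\ref{k-step_problem_plast_red})$_4$ rewritten through $p$, $\varrho$. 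Positivity $\triangle\lambda>0$ follows from the assumption $f(\mbf\sigma^{tr},H(\bar\varepsilon^{p,tr}))>0$ together with the monotonicity implicit in the updates.

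The only mildly delicate point is the second case of the converse direction: one must check that the multivalued selection $\hat{\mbf n}$ recovered in the apex-return case genuinely lies in $\partial\varrho(\mbf\sigma)$; this is precisely where the subdifferential description (\ref{rho_subgrad}) and the inequality built into the $(\cdot)^+$ formulation are used. Everything else is a direct algebraic consequence of the split into volumetric and deviatoric components.
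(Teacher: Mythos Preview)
Your proof is correct and follows essentially the same approach as the paper: the forward direction reproduces the volumetric/deviatoric split and the case analysis that the paper carries out in the text preceding the theorem (equations (\ref{flow_dev})--(\ref{flow_rho_improve})), and your converse direction spells out explicitly what the paper leaves implicit under ``summarizes and completes the derived results.'' In particular, your explicit construction of the admissible selection $\hat{\mbf n}=\mbf s^{tr}/(\triangle\lambda G\sqrt{2})$ in the apex case and the verification that it lies in the set (\ref{rho_subgrad}) is exactly the missing detail the paper alludes to.
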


Notice that the knowledge of the subdifferential of $\varrho$ enables us to formulate the plastic corrector problem as a unique system of nonlinear equations in comparison to the current technique introduced in \cite{NPO08}. Moreover, one can eliminate the unknowns $p,\varrho,\bar{\varepsilon}^p$ similarly as for the current return-mapping scheme of this model. Inserting of (\ref{k-step_problem_plast_red})$_{1-3}$ into (\ref{k-step_problem_plast_red})$_4$ leads to the nonlinear equation $q_{tr}(\triangle\lambda)=0$ where
\begin{equation}
q_{tr}(\gamma):=q(\gamma;p^{tr},\varrho^{tr},\bar{\varepsilon}^{p,tr})=\sqrt{\frac{1}{2}}\left(\varrho^{tr}-\gamma G\sqrt{2}\right)^++\eta(p^{tr}-\gamma K\bar\eta)-\xi\left(c_0+H(\bar{\varepsilon}^{p,tr}+\gamma\xi)\right),\quad\gamma\in\mathbb R_+,
\label{q_DP}
\end{equation}
using (\ref{yield_function}). We have the following solvability result.

\begin{theorem}
Let $f(\mbf{\sigma}^{tr},H(\bar{\varepsilon}^{p,tr}))>0$. Then there exists a unique solution, $\triangle\lambda>0$, of the equation $q_{tr}(\triangle\lambda)=0$. Furthermore, problems (\ref{k-step_problem_plast_red}), (\ref{k-step_problem_plast}) and (\ref{k-step_problem}) have also unique solutions. 

In addition, if $q_{tr}\left(\varrho^{tr}/G\sqrt{2}\right)<0$ then $\triangle\lambda\in(0,\varrho^{tr}/G\sqrt{2})$ and $\varrho>0$. Conversely, if $q_{tr}\left(\varrho^{tr}/G\sqrt{2}\right)$ $\geq0$ then $\triangle\lambda\geq\varrho^{tr}/G\sqrt{2}$ and $\varrho=0$.
\label{th_solvability_DP}
\end{theorem}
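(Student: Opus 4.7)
The plan is to reduce everything to the scalar equation $q_{tr}(\triangle\lambda)=0$ and exploit strict monotonicity of $q_{tr}$ on $\mathbb{R}_+$, then transfer uniqueness back through Theorem~\ref{lem_auxilliary_problem} to the original systems.

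First I would establish that $q_{tr}:\mathbb{R}_+\to\mathbb{R}$ is continuous and strictly decreasing. Continuity is immediate since $H$ is (strongly semismooth, hence) continuous and $s\mapsto s^+$ is continuous. For strict monotonicity, I would observe that the term $-\eta K\bar\eta\,\gamma$ is strictly decreasing because $\eta,K,\bar\eta>0$, while the remaining contributions $\sqrt{1/2}(\varrho^{tr}-\gamma G\sqrt{2})^+$ and $-\xi H(\bar\varepsilon^{p,tr}+\gamma\xi)$ are both nonincreasing (the first is piecewise affine with nonpositive slope, the second since $H$ is nondecreasing). The sum is therefore strictly decreasing.

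Next I would check boundary values. By the definition (\ref{q_DP}) and (\ref{yield_function}),
\[
q_{tr}(0)=\sqrt{\tfrac{1}{2}}\varrho^{tr}+\eta p^{tr}-\xi(c_0+H(\bar\varepsilon^{p,tr}))=f(\mbf\sigma^{tr},H(\bar\varepsilon^{p,tr}))>0
\]
by hypothesis, while $q_{tr}(\gamma)\to-\infty$ as $\gamma\to\infty$ because the positive part is eventually zero and $-\eta K\bar\eta\,\gamma\to-\infty$ (and $-\xi H(\cdot)\leq 0$). The intermediate value theorem combined with strict monotonicity yields a unique $\triangle\lambda>0$ with $q_{tr}(\triangle\lambda)=0$.

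Given this unique $\triangle\lambda$, the remaining equations (\ref{k-step_problem_plast_red})$_{1\text{--}3}$ determine $p$, $\varrho$, $\bar\varepsilon^p$ explicitly, so problem (\ref{k-step_problem_plast_red}) has a unique solution. Uniqueness for (\ref{k-step_problem_plast}) then follows from the equivalence in Theorem~\ref{lem_auxilliary_problem}: in the case $\varrho>0$ the stress is given by the first branch of the formula, and in the case $\varrho=0$ one has $\mbf s=\mbf 0$, so $\mbf\sigma=p\mbf I$ is uniquely fixed by $p$. Finally, for (\ref{k-step_problem}) under the hypothesis $f(\mbf\sigma^{tr},H(\bar\varepsilon^{p,tr}))>0$, any solution must have $\triangle\lambda>0$ (otherwise $\mbf\sigma=\mbf\sigma^{tr}$, $\bar\varepsilon^p=\bar\varepsilon^{p,tr}$ and the admissibility $f(\mbf\sigma,H(\bar\varepsilon^p))\leq 0$ would contradict the hypothesis), so it reduces to (\ref{k-step_problem_plast}) and uniqueness transfers.

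For the dichotomy, set $\gamma_0:=\varrho^{tr}/(G\sqrt{2})$. Since $q_{tr}$ is strictly decreasing, $q_{tr}(\gamma_0)<0$ forces the unique root to lie in $(0,\gamma_0)$, whence (\ref{k-step_problem_plast_red})$_2$ gives $\varrho=\varrho^{tr}-\triangle\lambda G\sqrt{2}>0$; conversely $q_{tr}(\gamma_0)\geq 0$ forces $\triangle\lambda\geq\gamma_0$ and thus $\varrho=(\varrho^{tr}-\triangle\lambda G\sqrt{2})^+=0$. I do not anticipate a genuinely hard step here; the only place requiring care is verifying strict (not merely weak) monotonicity of $q_{tr}$, since $H$ is only assumed nondecreasing—this is why the contribution of the strictly decreasing linear term $-\eta K\bar\eta\,\gamma$ is essential.
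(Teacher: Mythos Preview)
Your proof is correct and follows essentially the same approach as the paper: establish that $q_{tr}$ is continuous and strictly decreasing with $q_{tr}(0)>0$ and $q_{tr}(\gamma)\to-\infty$, apply the intermediate value theorem for existence and uniqueness of the root, then invoke Theorem~\ref{lem_auxilliary_problem} and the elastic predictor to transfer uniqueness to the original systems. Your treatment is in fact more explicit than the paper's---you spell out why strict (not merely weak) monotonicity holds via the $-\eta K\bar\eta\,\gamma$ term and how uniqueness propagates to (\ref{k-step_problem})---but the underlying argument is identical.
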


\begin{proof}
From (\ref{q_DP}) and the assumptions on $H$, it is readily seen that $q_{tr}$ is a continuous and decreasing function. Further, $q_{tr}(\gamma)\rightarrow-\infty$ as $\gamma\rightarrow+\infty$ and $q_{tr}(0)=f(\mbf{\sigma}^{tr},H(\bar{\varepsilon}^{p,tr}))>0$. Therefore, the equation $q_{tr}(\triangle\lambda)=0$ has just one solution in $\mathbb R_+$. If $q_{tr}\left(\varrho^{tr}/G\sqrt{2}\right)<0$ then $\triangle\lambda\in(0,\varrho^{tr}/G\sqrt{2})$. Otherwise, $\triangle\lambda\geq\varrho^{tr}/G\sqrt{2}$. The rest of the proof follows from Theorem \ref{lem_auxilliary_problem} and the elastic prediction.
\end{proof}

The second part of Theorem \ref{th_solvability_DP} is very useful from the computational point of view: one can a priori decide whether return to the smooth portion of the yield surface happens or not. This is the main difference in comparison with the current return-mapping scheme. The improved return-mapping scheme reads as follows.

\medskip\noindent
{\bf  Return to the smooth portion} 
\begin{enumerate}
\item {\it Necessary and sufficient condition}:  $q_{tr}(0)>0$ and $q_{tr}\left(\varrho^{tr}/G\sqrt{2}\right)<0$.
\item {\it Find} $\triangle\lambda\in(0,\varrho^{tr}/G\sqrt{2})$:
\begin{equation}
\sqrt{\frac{1}{2}}\left(\varrho^{tr}-\triangle\lambda G\sqrt{2}\right)+\eta\left(p^{tr}-\triangle\lambda K\bar\eta\right)-\xi\left(c_0+H(\bar{\varepsilon}^{p,tr}+\triangle\lambda\xi)\right)=0.
\label{lambda_problem1}
\end{equation}
\item {\it Set}
\begin{equation}
\mbf{\sigma}=\mbf{\sigma}^{tr}-\triangle\lambda\left(G\sqrt{2}\mbf{n}^{tr}+K\bar\eta\mbf I\right),\quad \bar{\varepsilon}^p=\bar{\varepsilon}^{p,tr}+ \triangle\lambda\xi.
\label{sigma_smooth}
\end{equation}
\end{enumerate}

\medskip\noindent
{\bf  Return to the apex} 
\begin{enumerate}
\item {\it Necessary and sufficient condition}: $q_{tr}\left(\varrho^{tr}/G\sqrt{2}\right)\geq0$. 
\item {\it Find} $\triangle\lambda\geq\varrho^{tr}/G\sqrt{2}$:
\begin{equation}
\eta\left(p^{tr}-\triangle\lambda K\bar\eta\right)-\xi\left(c_0+H(\bar{\varepsilon}^{p,tr}+\triangle\lambda\xi)\right)=0.
\label{lambda_problem2}
\end{equation}
\item {\it Set}
\begin{equation}
\mbf{\sigma}=\left({p^{tr}}-\triangle\lambda K\bar\eta\right)\mbf I,\quad \bar{\varepsilon}^p=\bar{\varepsilon}^{p,tr}+ \triangle\lambda\xi. 
\label{sigma_apex}
\end{equation}
\end{enumerate}
Nonlinear equations (\ref{lambda_problem1}) and (\ref{lambda_problem2}) can be
solved by the Newton method. Then it is natural to use the initial choice
$\triangle\lambda^0=0$, $\triangle\lambda^0=\varrho^{tr}/G\sqrt{2}$ for
(\ref{lambda_problem1}), and (\ref{lambda_problem2}), respectively. In case of perfect plasticity, $H=0$, or linear hardening, $H(\bar{\varepsilon}^p)=\tilde H\bar{\varepsilon}^p$, $\tilde H=const.$, equations (\ref{lambda_problem1}) and (\ref{lambda_problem2}) are linear and thus $\triangle\lambda$ can be found in the closed form.

%Notice that nonlinear equations (\ref{lambda_problem1}) and
%(\ref{lambda_problem2}) have been derived in \cite[Section 8]{NPO08}. However,
%the authors have not found in literature the decision criterion
%$q\left(\varrho_k^{tr}/G\sqrt{2}\right)\geq0$ which eliminate the blind guessing
%from the return-mapping scheme. The improved scheme has also theoretical advantages:
%\begin{itemize}
%\item existence and uniqueness of the solution to the original problem (\ref{k-step_problem}) is readily seen; 
%\item one can investigate the semismoothness of the implicit stress-strain operator introduced in Section \ref{sec_stress-strain}.
%\end{itemize}

\subsection{Stress-strain and consistent tangent operators}
\label{sec_stress-strain}

Solving the problem  (\ref{k-step_problem}), we obtain a nonlinear and implicit operator between the stress tensor, $\mbf\sigma=\mbf\sigma(t_k)$, and the strain tensor, $\mbf\varepsilon=\mbf\varepsilon(t_k)$. The stress-strain operator, $\mbf T$, also depends on $\mbf{\varepsilon}^p(t_{k-1})$ and $\bar{\varepsilon}^p(t_{k-1})$ through the trial variables. To emphasize this fact we write $\mbf{\sigma}:=\mbf T(\mbf{\varepsilon};\mbf{\varepsilon}^p(t_{k-1}),\bar\varepsilon^p(t_{k-1}))$. From the results introduced in Section \ref{subsec_const_sol}, we have $\mbf T(\mbf{\varepsilon};\mbf{\varepsilon}^p(t_{k-1}),\bar\varepsilon^p(t_{k-1}))=\mbf S(\mbf{\sigma}^{tr},\bar\varepsilon^{p,tr})$, where
\begin{equation}
\mbf S(\mbf{\sigma}^{tr},\bar\varepsilon^{p,tr})=\left\{
\begin{array}{c c c}
\mbf{\sigma}^{tr} &\mbox{if}& q_{tr}(0)\leq0,\\
\mbf{\sigma}^{tr}-\triangle\lambda\left(G\sqrt{2}\mbf{n}^{tr}+K\bar\eta\mbf I\right) &\mbox{if}&q_{tr}(0)>0,\;q_{tr}\left({\varrho^{tr}}/{G\sqrt{2}}\right)<0,\\
\left({p^{tr}}-\triangle\lambda K\bar\eta\right)\mbf I &\mbox{if}& q_{tr}\left({\varrho^{tr}}/{G\sqrt{2}}\right)\geq0,
\end{array}
\right.
\label{S_def}
\end{equation}
where $\triangle\lambda$ is the solution to (\ref{lambda_problem1}), (\ref{lambda_problem2}) in (\ref{S_def})$_2$, and (\ref{S_def})$_3$, respectively, i.e. $q_{tr}(\triangle\lambda)=0$.

\begin{theorem}
The function $\mbf T$ is strongly semismooth in $\mathbb R^{3\times3}_{sym}$ with respect to $\mbf\varepsilon$.
\label{th_semi_DP}
\end{theorem}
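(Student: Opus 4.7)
The plan is to reduce the statement via Proposition \ref{prop_semi} to strong semismoothness of two scalar invariants of the returned stress. Since the map $\mbf\varepsilon\mapsto\mbf\sigma^{tr}=\mathbb D_e:(\mbf\varepsilon-\mbf\varepsilon^p(t_{k-1}))$ is affine (hence strongly semismooth), and strong semismoothness is preserved under composition, it suffices to prove that $\mbf\sigma^{tr}\mapsto\mbf S(\mbf\sigma^{tr},\bar\varepsilon^{p,tr})$ is strongly semismooth on $\mathbb R^{3\times 3}_{sym}$ for the fixed parameter $\bar\varepsilon^{p,tr}$.

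First I would unify the three branches of (\ref{S_def}) by introducing, for every $\mbf\sigma^{tr}$,
\[
\hat p(\mbf\sigma^{tr}):=p^{tr}-\triangle\lambda\,K\bar\eta,\qquad \hat\varrho(\mbf\sigma^{tr}):=\bigl(\varrho^{tr}-\triangle\lambda\,G\sqrt{2}\bigr)^+,
\]
where $\triangle\lambda=\triangle\lambda(\mbf\sigma^{tr})\ge 0$ is the unique multiplier from Theorem \ref{th_solvability_DP} (set to $0$ in the elastic branch). A direct inspection of (\ref{S_def}), using Theorem \ref{th_solvability_DP} to identify the three regimes via the sign of $\varrho^{tr}-\triangle\lambda\,G\sqrt{2}$, gives $\mbf S=\hat p\,\mbf I+\hat\varrho\,\mbf n(\mbf\sigma^{tr})$ when $\varrho^{tr}>0$ and $\mbf S=\hat p\,\mbf I$ when $\varrho^{tr}=0$. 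Since $\triangle\lambda\ge 0$, the function $\hat\varrho$ automatically vanishes whenever $\varrho^{tr}=0$. Hence $\mbf S$ already has the form required by Proposition \ref{prop_semi}, and the task reduces to establishing the strong semismoothness of $\hat p$ and $\hat\varrho$ in $\mbf\sigma^{tr}$.

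The crux is to show that $\triangle\lambda(\cdot)$ is strongly semismooth. I would recast the loading/unloading conditions $\gamma\ge 0$, $q_{tr}(\gamma)\le 0$, $\gamma q_{tr}(\gamma)=0$ as the single min-complementarity equation
\[
\mathcal I(\mbf\sigma^{tr},\gamma):=\min\bigl\{\gamma,\, -q_{tr}(\gamma;\,p(\mbf\sigma^{tr}),\varrho(\mbf\sigma^{tr}),\bar\varepsilon^{p,tr})\bigr\}=0,
\]
whose unique nonnegative solution is $\gamma=\triangle\lambda$. The map $\mathcal I$ is a composition of strongly semismooth building blocks: the min and $(\cdot)^+$ operations, the strongly semismooth hardening function $H$, the invariants $p(\cdot)$ and $\varrho(\cdot)$ (whose strong semismoothness is part of Proposition \ref{prop_semi}), and affine scalar maps; hence $\mathcal I$ is strongly semismooth. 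To apply Theorem \ref{implicit} I must verify the maximal rank condition (\ref{full_rank}) for $\partial_\gamma\mathcal I$. A short computation shows that every element of $\partial(-q_{tr})(\gamma)$ is bounded below by $\eta K\bar\eta>0$, because the contributions of $(\cdot)^+$ and of $H$ are nonnegative ($H$ is nondecreasing), and $\partial_\gamma\mathcal I$ equals the convex hull of $\{1\}$ and $\partial(-q_{tr})(\gamma)$; thus every element of $\partial_\gamma\mathcal I$ is strictly positive. Theorem \ref{implicit} then yields the strong semismoothness of $\triangle\lambda(\cdot)$.

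Combining this with strong semismoothness of the invariants $p^{tr}$, $\varrho^{tr}$ and with closure of strongly semismooth functions under sums, products with constants, $(\cdot)^+$ and composition, the fields $\hat p$ and $\hat\varrho$ are strongly semismooth on $\mathbb R^{3\times 3}_{sym}$. Proposition \ref{prop_semi} then implies strong semismoothness of $\mbf S(\cdot,\bar\varepsilon^{p,tr})$, and composition with the affine trial map yields strong semismoothness of $\mbf T$ in $\mbf\varepsilon$. The main delicate point is handling uniformly the three branches of (\ref{S_def}) together with the degenerate stresses $\varrho^{tr}=0$, where $\mbf n^{tr}$ is multivalued; the former is absorbed by the min-complementarity reformulation of $\triangle\lambda$, and the latter is precisely what Proposition \ref{prop_semi} is designed to address, so no separate case analysis at the transition points is required.
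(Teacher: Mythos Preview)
Your proposal is correct and follows essentially the same architecture as the paper: reduce $\mbf S$ to the scalar invariants $\hat p,\hat\varrho$, invoke the implicit function theorem (Theorem \ref{implicit}) to obtain strong semismoothness of $\triangle\lambda$, and then apply Proposition \ref{prop_semi}. The only minor difference is that you handle the elastic/plastic transition via the min-complementarity equation $\min\{\gamma,-q_{tr}(\gamma)\}=0$, whereas the paper applies Theorem \ref{implicit} directly to $q_{tr}$ and then absorbs the elastic branch by writing $(\triangle\lambda)^+$ in the formulas for $\hat p$ and $\hat\varrho$; both devices achieve the same end.
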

\begin{proof}
We use the framework introduced in Section \ref{sub_semi}. Consider the function $\triangle\lambda:=\triangle\lambda(\mbf{\sigma}^{tr},\bar\varepsilon^{p,tr})$ satisfying $\triangle\lambda=0$ if $q_{tr}(0)\leq0$, otherwise $q_{tr}(\triangle\lambda)=0$. Applying Theorem \ref{implicit} on the implicit function $q_{tr}$, one can easily find that the function $\triangle\lambda$ is strongly semismooth. Consequently, the functions $\hat p(\mbf{\sigma}^{tr},\bar\varepsilon^{p,tr})={p^{tr}}-(\triangle\lambda)^+ K\bar\eta$ and $\hat \varrho(\mbf{\sigma}^{tr},\bar\varepsilon^{p,tr})=\left(\varrho^{tr}-(\triangle\lambda)^+ G\sqrt{2}\right)^+$ are strongly semismooth. Since $\mbf S(\mbf{\sigma}^{tr},\bar\varepsilon^{p,tr})=\hat p(\mbf{\sigma}^{tr},\bar\varepsilon^{p,tr})\mbf I+\hat\varrho(\mbf{\sigma}^{tr},\bar\varepsilon^{p,tr})\mbf n(\mbf{\sigma}^{tr})$, we obtain strong semismoothness of the functions $\mbf S$ and $\mbf T$ using Proposition \ref{prop_semi}.
\end{proof}

Notice that $\mbf T$ is not smooth if $q_{tr}(0)=0$ or $q_{tr}\left({\varrho^{tr}}/{G\sqrt{2}}\right)=0$ or if $H$ has not derivative at $\bar{\varepsilon}^{p,tr}+\triangle\lambda\xi$. We introduce the derivative $\partial\mbf\sigma/\partial\mbf\varepsilon$ under the assumption that any of these conditions does not hold. Set $H_1:=H'(\bar{\varepsilon}^{p,tr}+\triangle\lambda\xi)$.
Using (\ref{inv_der1}), (\ref{inv_der2}), (\ref{elastic_law}) and the chain rule, we obtain the following auxilliary derivatives:
$$\frac{\partial \mbf\sigma^{tr}}{\partial\mbf{\varepsilon}}=\mathbb D_e,\quad \frac{\partial p^{tr}}{\partial\mbf{\varepsilon}}=K\mbf I,\quad\frac{\partial\mbf{s}^{tr}}{\partial\mbf{\varepsilon}}=2G\mathbb I_{dev},\quad \frac{\partial \varrho^{tr}}{\partial\mbf{\varepsilon}}=2G\mbf{n}^{tr},\quad\frac{\partial\mbf{n}^{tr}}{\partial\mbf{\varepsilon}}=\frac{2G}{\varrho^{tr}}\left(\mathbb I_{dev}-\mbf{n}^{tr}\otimes\mbf{n}^{tr}\right).$$
We distinguish three possible cases:
\begin{enumerate}
\item Let $q_{tr}(0)<0$ (\textit{elastic response}). Then clearly,
\begin{equation}
\frac{\partial\mbf{\sigma}}{\partial\mbf{\varepsilon}}=\mathbb D_e.
\label{deriv_elast}
\end{equation}
\item Let $q_{tr}(0)>0$ and $q_{tr}\left({\varrho^{tr}}/{G\sqrt{2}}\right)<0$
(\textit{return to the smooth surface}). Then the derivative of
(\ref{sigma_smooth}) reads
\begin{equation*}
\frac{\partial\mbf{\sigma}}{\partial\mbf{\varepsilon}}=\mathbb D_e-\triangle\lambda\frac{2G^2\sqrt{2}}{\varrho^{tr}}\left(\mathbb I_{dev}-\mbf{n}^{tr}\otimes\mbf{n}^{tr}\right)-(G\sqrt{2}\mbf{n}^{tr}+K\bar\eta\mbf I)\otimes\frac{\partial\triangle\lambda}{\partial\mbf{\varepsilon}}.
\end{equation*}
Applying the implicit function theorem on (\ref{lambda_problem1}), we obtain
\begin{equation*}
\frac{\partial\triangle\lambda}{\partial\mbf{\varepsilon}}=\frac{G\sqrt{2}\mbf{n}^{tr}+\eta K\mbf I}{G+K\eta\bar\eta+\xi^2H_1}.
\end{equation*}
Hence,
\begin{equation}
\frac{\partial\mbf{\sigma}}{\partial\mbf{\varepsilon}}=\mathbb D_e-\triangle\lambda\frac{2G^2\sqrt{2}}{\varrho^{tr}}\left(\mathbb I_{dev}-\mbf{n}^{tr}\otimes\mbf{n}^{tr}\right)-(G\sqrt{2}\mbf{n}^{tr}+K\bar\eta\mbf I)\otimes\frac{G\sqrt{2}\mbf{n}^{tr}+\eta K\mbf I}{G+K\eta\bar\eta+\xi^2H_1}.
\label{deriv_smooth}
\end{equation}
\item Let $q_{tr}\left({\varrho^{tr}}/{G\sqrt{2}}\right)>0$ (\textit{return to the apex}). Then the derivative of (\ref{sigma_apex}) yields
\begin{equation*}
\frac{\partial\mbf{\sigma}}{\partial\mbf{\varepsilon}}=K\left(\mbf I\otimes \mbf I-\bar\eta\mbf I\otimes\frac{\partial\triangle\lambda}{\partial\mbf{\varepsilon}}\right).
\end{equation*}
Applying the implicit function theorem on (\ref{lambda_problem2}), we obtain
\begin{equation*}
\frac{\partial\triangle\lambda}{\partial\mbf{\varepsilon}}=\frac{\eta K}{K\eta\bar\eta+\xi^2H_1}\mbf I.
\end{equation*}
Hence,
\begin{equation}
\frac{\partial\mbf{\sigma}}{\partial\mbf{\varepsilon}}=K\left(1-\frac{K\eta\bar\eta}{K\eta\bar\eta+\xi^2H_1}\right)\mbf I\otimes \mbf I.
\label{deriv_apex}
\end{equation}
\end{enumerate}
The derivatives (\ref{deriv_elast})--(\ref{deriv_apex}) define the consistent tangent operator, $\mathbb T^o$. It is readily seen that the tangent operator is symmetric if $\bar\eta=\eta$, i.e. for the associative plasticity. For purposes of Section \ref{sec_realization}, it is useful to extend the definition of $\mathbb T^o$ for nondifferential points. For example, one can write
\begin{equation}
\mathbb T^o(\mbf{\varepsilon};\mbf{\varepsilon}^p(t_{k-1}),\bar\varepsilon^p(t_{k-1}))=\left\{
\begin{array}{c c c}
\mathbb D_e &\mbox{if}& q_{tr}(0)\leq0,\\[2pt]
(\ref{deriv_smooth}) &\mbox{if}&q_{tr}(0)>0,\;q_{tr}\left({\varrho^{tr}}/{G\sqrt{2}}\right)<0,\\[2pt]
(\ref{deriv_apex}) &\mbox{if}& q_{tr}\left({\varrho^{tr}}/{G\sqrt{2}}\right)\geq0,
\end{array}
\right.
\end{equation}
where $H_1$ in (\ref{deriv_smooth}), (\ref{deriv_apex}) is the derivative from left of $H$ at
$\bar{\varepsilon}^{p,tr}+\triangle\lambda\xi$. Notice that $$\mathbb T^o(\mbf{\varepsilon};\mbf{\varepsilon}^p(t_{k-1}),\bar\varepsilon^p(t_{k-1}))\in\partial_{\mbf\varepsilon} \mbf T(\mbf{\varepsilon};\mbf{\varepsilon}^p(t_{k-1}),\bar\varepsilon^p(t_{k-1})).$$

%********************** Section 4**********************************************************************

\section{A simplified version of the Jir\'asek-Grassl model}
\label{sec_JG}

The Jir\'asek-Grassl model was introduced in \cite{GJ06}.  It is a plastic-damage model proposed for complex modelling of concrete failure. The model has been further developed. For example,  Unteregger and Hofstetter \cite{UH15} have improved a hardening law and used the model in rock mechanics. For the sake of simplicity, we only consider a perfect plastic part of this model to illustrate the suggested idea and improve the implicit return-mapping scheme. The whole plastic part of the Jir\'asek-Grassl model can be included to an abstract model studied in the next section.

\subsection{Constitutive problem and its solution}

The perfect plastic model contains the yield function proposed in \cite{MW95}:
\begin{equation}
f(\mbf\sigma)=\hat f(p(\mbf\sigma),\varrho(\mbf\sigma),\varrho_e(\mbf\sigma))=\frac{3}{2}\left(\frac{\varrho}{\bar f_c}\right)^2+m_0\left(\frac{\varrho_e}{\sqrt{6}\bar f_c}+\frac{p}{\bar f_c}\right)-1,
\label{yield_function_JG}
\end{equation}
where $m_0$ is the friction parameter and $\bar f_c$ is the uniaxial compressive strength. The invariants $p$, $\varrho$ and $\varrho_e=\varrho r_e(\cos\theta)$ were introduced in Section \ref{sec_preliminaries}. Notice that the couple $(p^a,\varrho^a)=(\bar f_c/m_0,0)$ defines the apex of the yields surface generated by the function $\hat f$. The yield surface is not smooth only at this apex. Scheme of the yield surface can be found in \cite{GJ06}.

Further, the following plastic pseudo-potential is considered \cite{GJ06}:
\begin{equation}
g(\mbf\sigma)=\hat g(p(\mbf\sigma),\varrho(\mbf\sigma))=\frac{3}{2}\left(\frac{\varrho}{\bar f_c}\right)^2+\frac{m_0\varrho}{\sqrt{6}\bar f_c}+\frac{m_g(p)}{\bar f_c},
\label{potential_JG}
\end{equation}
where 
\begin{equation}
m_g(p)=A_gB_g\bar f_c e^{\frac{p-\bar f_t/3}{B_g\bar f_c}},\qquad A_g, B_g, \bar f_c, \bar f_t>0.
\label{m_g}
\end{equation}
The subdifferential of $g$ consists of the following directions:
$$\frac{1}{3}\hat g_V(p(\mbf{\sigma}),\varrho(\mbf{\sigma}))\mbf I+\hat g_\varrho(p(\mbf{\sigma}),\varrho(\mbf{\sigma}))\hat{\mbf n},\quad \hat{\mbf n}\in\partial\varrho(\mbf\sigma),$$
where $\partial\varrho(\mbf\sigma)$ is defined by (\ref{rho_subgrad}) and
 $$\hat g_V(p,\varrho):=\frac{\partial \hat g}{\partial p}=\frac{m_g'(p)}{\bar f_c},\quad \hat g_\varrho(p,\varrho):=\frac{\partial \hat g}{\partial \varrho}=\frac{3\varrho}{\bar f_c^2}+\frac{m_0}{\sqrt{6}\bar f_c}, \quad m_g'(p)=A_ge^{\frac{p-\bar f_t/3}{B_g\bar f_c}}.$$

The $k$-step of the incremental constitutive problem received by the implicit Euler method reads as follows. \textit{Given $\mbf{\varepsilon}:=\mbf{\varepsilon}(t_k)$ and $\mbf{\sigma}^{tr}:=\mathbb D_e:(\mbf{\varepsilon}(t_k)-\mbf{\varepsilon}^p(t_{k-1}))$. Find $\mbf{\sigma}=\mbf{\sigma}(t_k)$ and $\triangle\lambda$ satisfying:
\begin{equation}
\left.
\begin{array}{c}
\mbf{\sigma}=\mbf{\sigma}^{tr}-\triangle\lambda\left[K\frac{m_g'(p)}{\bar f_c}\mbf I+2G\left(\frac{3\varrho}{\bar f_c^2}+\frac{m_0}{\sqrt{6}\bar f_c}\right)\hat{\mbf n}\right],\quad \hat{\mbf n}\in\partial\varrho(\mbf\sigma),\\[5pt]
\triangle\lambda\geq0,\quad \hat f(p(\mbf{\sigma}),\varrho(\mbf{\sigma}),\varrho_e(\mbf{\sigma}))\leq0,\quad \triangle\lambda\hat f(p(\mbf{\sigma}),\varrho(\mbf{\sigma}),\varrho_e(\mbf{\sigma}))=0.
\end{array}
\right\}
\label{k-step_problem_JG}
\end{equation}}

We solve this problem again by the elastic predictor/plastic corrector method. Within the plastic correction, we define the trial variables $\mbf s^{tr}$, $p^{tr}$, $\varrho^{tr}$, $\mbf n^{tr}=\mbf s^{tr}/\varrho^{tr}$, $\theta^{tr}$, $r_e^{tr}=r_e(\cos\theta^{tr})$ and $\varrho_e^{tr}$ associated with $\mbf{\sigma}^{tr}$ and obtain the following result. 
\begin{theorem}
Let $\hat f(p^{tr},\varrho^{tr},\varrho_e^{tr})>0$. If $(\mbf{\sigma},\triangle\lambda)$ is a solution to problem (\ref{k-step_problem_JG}) and $p=\mbf I: \mbf{\sigma}/3$, $\mbf{s}=\mathbb I_{dev}:\mbf{\sigma}$, $\varrho=\|\mbf{s}\|$, then $(p,\varrho,\triangle\lambda)$ is a solution to the following system:
\begin{equation}
\left.
\begin{array}{c}
{p}=p^{tr}-\triangle\lambda K\frac{m_g'(p)}{\bar f_c},\\[5pt]
\varrho=\left[\varrho^{tr}-\triangle\lambda 2G\left(\frac{3\varrho}{\bar f_c^2}+\frac{m_0}{\sqrt{6}\bar f_c}\right)\right]^+,\\[5pt]
\hat f(p,\varrho,\varrho r_e^{tr})=0.
\end{array}
\right\}
\label{k-step_problem_plast_JG}
\end{equation}
Conversely, if $(p,\varrho,\triangle\lambda)$ is a solution to (\ref{k-step_problem_plast_JG}) then $(\mbf{\sigma},\triangle\lambda)$ is the solution to (\ref{k-step_problem_JG}) where
\begin{equation}
\mbf{\sigma}=\left\{
\begin{array}{c c l}
p \mbf I+\varrho\mbf {n}^{tr} &\mbox{if}& \varrho^{tr}>\triangle\lambda 2G\left(\frac{3\varrho}{\bar f_c^2}+\frac{m_0}{\sqrt{6}\bar f_c}\right),\\[5pt]
p\mbf I &\mbox{if}& \varrho^{tr}\leq\triangle\lambda 2G\left(\frac{3\varrho}{\bar f_c^2}+\frac{m_0}{\sqrt{6}\bar f_c}\right).\\
\end{array}
\right.
\end{equation}
\label{lem_auxilliary_problem_JG}
\end{theorem}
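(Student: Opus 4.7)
The plan is to mirror the argument used for Theorem \ref{lem_auxilliary_problem} in the Drucker-Prager case, with two additional ingredients specific to this model: the implicit scalar equation for $p$ arising from the exponential $m_g'(p)$, and the Lode-angle dependence of the yield function through $\varrho_e$. For the forward direction, I would start by decomposing the flow equation (\ref{k-step_problem_JG})$_1$ into volumetric and deviatoric parts. Taking the trace and using (\ref{rho_subgrad_prop}) gives $p=p^{tr}-\triangle\lambda K m_g'(p)/\bar f_c$ directly. The deviatoric projection yields
\begin{equation*}
\mbf s=\mbf s^{tr}-\triangle\lambda \,2G\!\left(\tfrac{3\varrho}{\bar f_c^2}+\tfrac{m_0}{\sqrt{6}\bar f_c}\right)\hat{\mbf n},\qquad \hat{\mbf n}\in\partial\varrho(\mbf\sigma).
\end{equation*}

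Next I would split into the cases $\varrho>0$ and $\varrho=0$. When $\varrho>0$, $\hat{\mbf n}=\mbf n=\mbf s/\varrho$ is unique, so $\mbf s$ is a positive multiple of $\mbf s^{tr}$, forcing $\mbf n=\mbf n^{tr}$ and in particular $\theta(\mbf\sigma)=\theta(\mbf\sigma^{tr})$; taking the norm produces exactly $\varrho=\varrho^{tr}-\triangle\lambda\,2G(3\varrho/\bar f_c^2+m_0/(\sqrt{6}\bar f_c))$. When $\varrho=0$, the bound $\|\hat{\mbf n}\|\leq 1$ from (\ref{rho_subgrad}) combined with the deviatoric equation (where the $3\varrho/\bar f_c^2$ term vanishes) gives $\varrho^{tr}\leq \triangle\lambda\,2G\,m_0/(\sqrt{6}\bar f_c)$, so the argument of the positive part in (\ref{k-step_problem_plast_JG})$_2$ is non-positive; applying the same argument as around (\ref{flow_rho})--(\ref{flow_rho_improve}) unifies both cases into the stated $(\cdot)^+$ formula. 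Since $\theta=\theta^{tr}$ whenever $\varrho>0$ and $\varrho_e=0$ when $\varrho=0$, we obtain $\varrho_e=\varrho\, r_e^{tr}$ in both cases, and the yield condition reduces to $\hat f(p,\varrho,\varrho r_e^{tr})=0$.

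For the converse, I would start from a solution $(p,\varrho,\triangle\lambda)$ of (\ref{k-step_problem_plast_JG}) and distinguish according to whether $\varrho^{tr}>\triangle\lambda\,2G(3\varrho/\bar f_c^2+m_0/(\sqrt{6}\bar f_c))$ or not. In the first case, (\ref{k-step_problem_plast_JG})$_2$ gives $\varrho>0$, and setting $\mbf\sigma=p\mbf I+\varrho\mbf n^{tr}$ one checks directly that the chosen $\hat{\mbf n}=\mbf n^{tr}\in\partial\varrho(\mbf\sigma)$ satisfies (\ref{k-step_problem_JG})$_1$, while $\theta(\mbf\sigma)=\theta(\mbf\sigma^{tr})$ yields $\varrho_e(\mbf\sigma)=\varrho r_e^{tr}$ and hence $\hat f(p,\varrho,\varrho_e)=0$. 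In the second case, $\varrho=0$ and setting $\mbf\sigma=p\mbf I$ forces the choice $\hat{\mbf n}=\mbf s^{tr}/[\triangle\lambda\,2G\, m_0/(\sqrt{6}\bar f_c)]$, which belongs to $\partial\varrho(\mbf\sigma)$ in view of (\ref{rho_subgrad}), since $\mbf I:\mbf s^{tr}=0$ and the case hypothesis makes $\|\hat{\mbf n}\|\leq 1$; the yield equation holds trivially with $\varrho_e=0$.

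The main obstacle is not the implicit character of the $p$-equation (we only need equivalence, not solvability), but rather guaranteeing that the Lode-angle dependent invariant $\varrho_e$ collapses to a known constant multiple of $\varrho$ on both branches. The key observation making this work is that the plastic pseudo-potential (\ref{potential_JG}) is independent of the Lode angle, so the deviatoric correction is purely radial in the $\pi$-plane, freezing $\theta=\theta^{tr}$ whenever $\varrho>0$; without this structural property the reduction to the three-equation system (\ref{k-step_problem_plast_JG}) would not be available, which is precisely why the model is treated here rather than in the abstract framework of Section \ref{sec_general}.
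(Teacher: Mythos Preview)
Your proposal is correct and follows essentially the same approach as the paper: the volumetric/deviatoric split of (\ref{k-step_problem_JG})$_1$, the observation that $\mbf s$ is a positive scalar multiple of $\mbf s^{tr}$ when $\varrho>0$ (hence $\mbf n=\mbf n^{tr}$ and $\theta=\theta^{tr}$), and the subdifferential bound $\|\hat{\mbf n}\|\le 1$ to handle the apex case, exactly as in Section~\ref{subsec_const_sol}. The paper's own proof is a terse sketch that points back to that section and records only the key relation $\mbf s^{tr}=\bigl(1+\triangle\lambda\,2G(3\varrho/\bar f_c^2+m_0/(\sqrt{6}\bar f_c))/\varrho\bigr)\mbf s$ together with the remark that $\varrho\, r_e^{tr}\to 0$ as $\varrho^{tr}\to 0_+$; your write-up simply makes the case analysis and the converse construction explicit.
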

\begin{proof}
To prove Theorem \ref{lem_auxilliary_problem_JG} we use the same technique as in Section \ref{subsec_const_sol}. It is based on the splitting the stress tensor on the deviatoric and volumetric parts, and on using linear dependence between $\mbf s$ and $\mbf s^{tr}$ to reduce a number of unknowns.  In particular, we have
$$\mbf s^{tr}=\left(1+\triangle\lambda 2G\left(\frac{3\varrho}{\bar f_c^2}+\frac{m_0}{\sqrt{6}\bar f_c}\right)\frac{1}{\varrho}\right)\mbf{s}$$
for $\varrho>0$. Consequently, we obtain (\ref{k-step_problem_plast_JG})$_2$, $\mbf n=\mbf n^{tr}$ and also $\theta=\theta^{tr}$ for $\varrho>0$ using (\ref{theta}).
Finally, notice that $\varrho r_e^{tr}\rightarrow 0$ as $\varrho^{tr}\rightarrow0_+$. Indeed, $\varrho\rightarrow0$ as $\varrho^{tr}\rightarrow0_+$ and the function $r_e(\cos(.))$ is bounded.
\end{proof}

Analogously to the Drucker-Prager model, one can analyze existence and uniqueness of a solution to problem (\ref{k-step_problem_plast_JG}), and a priori decide whether the return to the smooth portion of the yield surface happens or not. To this end, we define implicit functions $\hat p_{tr}:\gamma\mapsto p_\gamma$ and $\hat \varrho_{tr}:\gamma\mapsto \varrho_\gamma$ such that
$$p_\gamma+\gamma K\frac{m_g'(p_\gamma)}{\bar f_c} -p^{tr}=0,\quad \varrho_\gamma-\left[\varrho^{tr}-\gamma 2G\left(\frac{3\varrho_\gamma}{\bar f_c^2}+\frac{m_0}{\sqrt{6}\bar f_c}\right)\right]^+=0,$$
respectively, for any $\gamma\geq0$. The following lemma is a consequence of the implicit function theorem.

\begin{lem}
The functions $\hat p_{tr}$ and $\hat\varrho_{tr}$ are well-defined in $\mathbb R_+$. Further, $\hat p_{tr}$ is smooth and decreasing in $\mathbb R_+$,  $\hat\varrho_{tr}$ is decreasing in the interval $\left[0, \frac{\sqrt{6}\bar f_c\varrho^{tr}}{2Gm_0}\right)$ and its closed form reads as follows:
\begin{equation}
\hat\varrho_{tr}(\gamma)=\frac{1}{1+\gamma\frac{6G}{\bar f_c^2}}\left(\varrho^{tr}-\gamma\frac{2Gm_0}{\sqrt{6}\bar f_c}\right)^+\quad\forall \gamma\geq0.
\label{hat_varrho}
\end{equation}
\label{lem_JG}
\end{lem}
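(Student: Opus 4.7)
The plan is to handle the two implicit functions separately, since they have very different structures: $\hat p_{tr}$ is defined implicitly through a smooth strictly increasing equation, while $\hat\varrho_{tr}$ involves a positive part and must be analyzed by a case split.

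For $\hat p_{tr}$, I would define $F(p,\gamma) = p + \gamma K m_g'(p)/\bar f_c - p^{tr}$ and compute $\partial F/\partial p = 1 + \gamma K m_g''(p)/\bar f_c = 1 + \gamma K m_g'(p)/(B_g \bar f_c^2)$, which is strictly positive for every $\gamma \ge 0$ and every $p\in\mathbb R$ because $m_g'(p) = A_g e^{(p-\bar f_t/3)/(B_g\bar f_c)} > 0$. Moreover $F(\cdot,\gamma) \to \pm\infty$ as $p \to \pm\infty$ (the exponential term is bounded below by $0$, so the linear term $p$ drives the limits). Thus for each $\gamma\ge 0$ the equation $F(p_\gamma,\gamma)=0$ has a unique solution, so $\hat p_{tr}$ is well defined on $\mathbb R_+$. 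Since $F$ is $C^\infty$ and $\partial F/\partial p$ never vanishes, the (classical) implicit function theorem gives smoothness of $\hat p_{tr}$. Implicit differentiation then yields
\begin{equation*}
\hat p_{tr}'(\gamma) = -\frac{K m_g'(\hat p_{tr}(\gamma))/\bar f_c}{1 + \gamma K m_g'(\hat p_{tr}(\gamma))/(B_g \bar f_c^2)} < 0,
\end{equation*}
proving monotonicity.

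For $\hat\varrho_{tr}$, I would split the analysis according to whether the positive part is active. First, assume a candidate solution satisfies $\varrho_\gamma > 0$; then the positive part can be dropped and the equation becomes linear in $\varrho_\gamma$, giving
\begin{equation*}
\varrho_\gamma\Bigl(1 + \gamma\frac{6G}{\bar f_c^2}\Bigr) = \varrho^{tr} - \gamma\frac{2Gm_0}{\sqrt{6}\bar f_c},
\end{equation*}
so $\varrho_\gamma$ is given by the right-hand side of (\ref{hat_varrho}) without the positive part. This candidate is strictly positive precisely when $\gamma < \sqrt{6}\bar f_c\varrho^{tr}/(2Gm_0)$, in which case both sides of the original equation agree. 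Second, assume $\varrho_\gamma = 0$; plugging into the equation one needs $\bigl[\varrho^{tr} - \gamma\, 2Gm_0/(\sqrt{6}\bar f_c)\bigr]^+ = 0$, equivalent to $\gamma \ge \sqrt{6}\bar f_c\varrho^{tr}/(2Gm_0)$. The two cases together cover $\mathbb R_+$ with no overlap, which simultaneously proves existence, uniqueness, and the closed-form formula (\ref{hat_varrho}).

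The main obstacle, if any, is checking uniqueness across the case split: one must rule out that the linear branch produces a spurious positive solution when $\gamma \ge \sqrt{6}\bar f_c\varrho^{tr}/(2Gm_0)$, and that $\varrho_\gamma = 0$ is not also a solution when $\gamma < \sqrt{6}\bar f_c\varrho^{tr}/(2Gm_0)$; both exclusions follow directly from the sign condition on $\varrho^{tr} - \gamma\, 2Gm_0/(\sqrt{6}\bar f_c)$ worked out above. Monotonicity of $\hat\varrho_{tr}$ on $\bigl[0,\sqrt{6}\bar f_c\varrho^{tr}/(2Gm_0)\bigr)$ then follows from the explicit formula, since on that interval the numerator $\varrho^{tr} - \gamma\, 2Gm_0/(\sqrt{6}\bar f_c)$ is positive and strictly decreasing while the denominator $1+\gamma\, 6G/\bar f_c^2$ is positive and strictly increasing, so their quotient is strictly decreasing. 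This completes the proof.
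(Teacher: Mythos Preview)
Your proof is correct and follows the approach the paper indicates: the paper states only that the lemma ``is a consequence of the implicit function theorem'' and gives no further details, so your argument is a faithful fleshing-out of that hint. Your treatment of $\hat p_{tr}$ via the classical implicit function theorem and implicit differentiation is exactly what is intended, and for $\hat\varrho_{tr}$ the direct case split on the positive part---yielding the closed form (\ref{hat_varrho}) and the monotonicity from the explicit quotient---is the natural route and matches the paper's stated conclusion.
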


Now, consider the function $q_{tr}(\gamma):=q(\gamma;p^{tr},\varrho^{tr})$,
\begin{equation}
q_{tr}(\gamma)=\hat f(\hat p_{tr}(\gamma),\hat\varrho_{tr}(\gamma),\hat\varrho_{tr}(\gamma)r_e^{tr})=\frac{3}{2}\left(\frac{\hat\varrho_{tr}(\gamma)}{\bar f_c}\right)^2+m_0\left(\frac{\hat\varrho_{tr}(\gamma)r_e^{tr}}{\sqrt{6}\bar f_c}+\frac{\hat p_{tr}(\gamma)}{\bar f_c}\right)-1.
\label{q_JG}
\end{equation}

\begin{theorem}
Let $\hat f(p^{tr},\varrho^{tr},\varrho_e^{tr})>0$. Then there exists a unique solution, $\triangle\lambda>0$, of the equation $q_{tr}(\triangle\lambda)=0$. Furthermore, problems (\ref{k-step_problem_plast_JG}) and (\ref{k-step_problem_JG}) have also unique solutions. 

In addition, if $q_{tr}\left(\sqrt{6}\bar f_c\varrho^{tr}/2Gm_0\right)<0$ then $\triangle\lambda\in(0,\sqrt{6}\bar f_c\varrho^{tr}/2Gm_0)$ and $\varrho>0$. Conversely, if $q_{tr}\left(\sqrt{6}\bar f_c\varrho^{tr}/2Gm_0\right)\geq0$ then $\triangle\lambda\geq\sqrt{6}\bar f_c\varrho^{tr}/2Gm_0$ and $\varrho=0$.
\label{th_solvability_JG}
\end{theorem}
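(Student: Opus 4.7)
The approach closely mirrors the Drucker--Prager argument behind Theorem \ref{th_solvability_DP}: the plan is to show that $q_{tr}$ is continuous and strictly decreasing on $\mathbb R_+$, with $q_{tr}(0)=\hat f(p^{tr},\varrho^{tr},\varrho_e^{tr})>0$ by hypothesis and $q_{tr}(\gamma)\to-\infty$ as $\gamma\to\infty$. Then existence and uniqueness of $\triangle\lambda>0$ would follow from the intermediate value theorem combined with monotonicity, and Theorem \ref{lem_auxilliary_problem_JG} propagates uniqueness to the reduced problem (\ref{k-step_problem_plast_JG}) (via $p=\hat p_{tr}(\triangle\lambda)$, $\varrho=\hat\varrho_{tr}(\triangle\lambda)$) and to the original problem (\ref{k-step_problem_JG}).

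I would verify strict decrease by examining the three summands of (\ref{q_JG}) separately. For the volumetric term $m_0\hat p_{tr}(\gamma)/\bar f_c$, Lemma \ref{lem_JG} already provides strict decrease on $\mathbb R_+$; one can recover this by implicit differentiation of $p_\gamma+\gamma K m_g'(p_\gamma)/\bar f_c=p^{tr}$ using $m_g',m_g''>0$, which follows from the exponential form (\ref{m_g}). For the $\hat\varrho_{tr}$ contributions I would exploit the closed form (\ref{hat_varrho}): denoting $\gamma^{*}:=\sqrt{6}\bar f_c\varrho^{tr}/(2Gm_0)$, on $[0,\gamma^{*}]$ the numerator decreases linearly from $\varrho^{tr}$ to $0$ while the denominator strictly increases, so $\hat\varrho_{tr}$ is strictly decreasing and nonnegative there; hence $\hat\varrho_{tr}^2$ and $r_e^{tr}\hat\varrho_{tr}$ are strictly decreasing (note $r_e^{tr}>0$ by Section \ref{sec_preliminaries}). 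For $\gamma\geq\gamma^{*}$ the positive part vanishes, both $\varrho$-dependent terms are identically zero, and strict decrease of $q_{tr}$ is inherited from the $\hat p_{tr}$ term alone.

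Coercivity $q_{tr}(\gamma)\to-\infty$ reduces to $\hat p_{tr}(\gamma)\to-\infty$. If $\hat p_{tr}$ were bounded from below along some sequence $\gamma_n\to\infty$, then $m_g'(\hat p_{tr}(\gamma_n))$ would stay bounded below by a positive constant by monotonicity of the exponential in (\ref{m_g}), and the defining identity $\hat p_{tr}(\gamma_n)=p^{tr}-\gamma_n K m_g'(\hat p_{tr}(\gamma_n))/\bar f_c$ would force $\hat p_{tr}(\gamma_n)\to-\infty$, a contradiction. This yields the unique zero $\triangle\lambda>0$.

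For the dichotomy on the return type I would evaluate $q_{tr}$ at the breakpoint $\gamma^{*}$. Since $\hat\varrho_{tr}(\gamma^{*})=0$, the sign of $q_{tr}(\gamma^{*})$ locates the unique root: if $q_{tr}(\gamma^{*})<0$ then by $q_{tr}(0)>0$ and monotonicity $\triangle\lambda\in(0,\gamma^{*})$, whence $\varrho=\hat\varrho_{tr}(\triangle\lambda)>0$; if $q_{tr}(\gamma^{*})\geq0$ then $\triangle\lambda\geq\gamma^{*}$, whence $\varrho=\hat\varrho_{tr}(\triangle\lambda)=0$. The main technical obstacle I anticipate is carefully tracking the role of the $(\cdot)^+$ operator in (\ref{hat_varrho}) across $\gamma^{*}$ and coupling the monotonicity contributions of the $p$-branch and $\varrho$-branch so that strict (not merely weak) decrease of $q_{tr}$ holds on all of $\mathbb R_+$; this is what guarantees the root is unique rather than a whole interval.
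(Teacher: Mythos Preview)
Your proposal is correct and follows essentially the same route as the paper's proof: monotonicity of $q_{tr}$ is obtained from Lemma~\ref{lem_JG} by treating the $\hat p_{tr}$- and $\hat\varrho_{tr}$-contributions separately across the breakpoint $\gamma^{*}=\sqrt{6}\bar f_c\varrho^{tr}/(2Gm_0)$, the boundary behavior $q_{tr}(0)>0$ and $q_{tr}(\gamma)\to-\infty$ yields the unique root, and Theorem~\ref{lem_auxilliary_problem_JG} transfers uniqueness to (\ref{k-step_problem_plast_JG}) and (\ref{k-step_problem_JG}). Your contradiction argument for $\hat p_{tr}(\gamma)\to-\infty$ is in fact more detailed than what the paper supplies, which simply asserts the limit.
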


\begin{proof}
Since $\hat \varrho_{tr}>0$ and $\hat \varrho'_{tr}<0$ in $\left[0, \frac{\sqrt{6}\bar f_c\varrho^{tr}}{2Gm_0}\right)$, the functions $\hat\varrho_{tr}$, $\hat\varrho_{tr}^2$ are decreasing in this interval. For $\gamma\geq\frac{\sqrt{6}\bar f_c\varrho^{tr}}{2Gm_0}$, these functions vanish. Therefore,
from (\ref{q_JG}) and Lemma \ref{lem_JG}, it is follows that $q_{tr}$ is a continuous and decreasing function in $\mathbb R_+$.  Furthermore, $q_{tr}(\gamma)\rightarrow-\infty$ as $\gamma\rightarrow+\infty$ and $q(0)=\hat f(p^{tr},\varrho^{tr},\varrho_e^{tr})>0$. Hence, the equation $q_{tr}(\triangle\lambda)=0$ has a unique solution in $\mathbb R_+$. If $q_{tr}\left(\sqrt{6}\bar f_c\varrho^{tr}/2Gm_0\right)<0$ then $\triangle\lambda\in(0,\sqrt{6}\bar f_c\varrho^{tr}/2Gm_0)$. Otherwise, $\triangle\lambda\geq\sqrt{6}\bar f_c\varrho^{tr}/2Gm_0$. The rest of the proof follows from Theorem \ref{lem_auxilliary_problem_JG} and the elastic prediction.
\end{proof}

Although the function $q_{tr}$ is implicit the decision criterion introduced in Theorem \ref{th_solvability_JG} can be found in closed form.
\begin{lem}
\begin{equation}
q_{tr}\left(\frac{\sqrt{6}\bar f_c\varrho^{tr}}{2Gm_0}\right)\geq0 \quad\mbox{if and only if}\quad p^{tr}-\frac{\sqrt{6}K}{2G}\frac{m_g'(p^a)}{m_0}\varrho^{tr}-p^a\geq0,\quad p^a=\frac{\bar f_c}{m_0}.
\label{criterion_JG}
\end{equation}
\end{lem}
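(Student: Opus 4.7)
The plan is to evaluate $q_{tr}$ at $\gamma^{*}:=\sqrt{6}\bar f_c\varrho^{tr}/(2Gm_0)$ in closed form using the explicit expression (\ref{hat_varrho}), and then reduce the resulting scalar inequality to a condition on $p^{tr}$ via the monotonicity of the implicit map $\hat p_{tr}$.

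First I would plug $\gamma=\gamma^{*}$ into (\ref{hat_varrho}). The factor $\varrho^{tr}-\gamma^{*}\tfrac{2Gm_0}{\sqrt{6}\bar f_c}$ vanishes identically by definition of $\gamma^{*}$, so $\hat\varrho_{tr}(\gamma^{*})=0$. Inserting this into (\ref{q_JG}) kills both the quadratic term and the $r_e^{tr}$-term, leaving
\begin{equation*}
q_{tr}(\gamma^{*})=\frac{m_0}{\bar f_c}\,\hat p_{tr}(\gamma^{*})-1.
\end{equation*}
Since $p^a=\bar f_c/m_0$, the inequality $q_{tr}(\gamma^{*})\geq0$ is therefore equivalent to $\hat p_{tr}(\gamma^{*})\geq p^a$.

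Next, I would use the defining equation of $\hat p_{tr}$, namely $p_\gamma+\gamma K m_g'(p_\gamma)/\bar f_c=p^{tr}$. Introduce the auxiliary map $\Phi_\gamma(p):=p+\gamma K m_g'(p)/\bar f_c$. Because $m_g'(p)=A_g e^{(p-\bar f_t/3)/(B_g\bar f_c)}>0$ is strictly increasing in $p$, the function $\Phi_{\gamma^{*}}$ is strictly increasing, so $\hat p_{tr}(\gamma^{*})=\Phi_{\gamma^{*}}^{-1}(p^{tr})$ is well-defined (this is already guaranteed by Lemma \ref{lem_JG}) and the comparison $\hat p_{tr}(\gamma^{*})\geq p^a$ holds if and only if $p^{tr}=\Phi_{\gamma^{*}}(\hat p_{tr}(\gamma^{*}))\geq \Phi_{\gamma^{*}}(p^a)$.

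Finally, I would compute $\Phi_{\gamma^{*}}(p^a)$ explicitly by substituting the value of $\gamma^{*}$:
\begin{equation*}
\Phi_{\gamma^{*}}(p^a)=p^a+\frac{\sqrt{6}\bar f_c\varrho^{tr}}{2Gm_0}\cdot\frac{K m_g'(p^a)}{\bar f_c}=p^a+\frac{\sqrt{6}K}{2G}\frac{m_g'(p^a)}{m_0}\varrho^{tr}.
\end{equation*}
Rearranging yields precisely the condition on the right-hand side of (\ref{criterion_JG}), which closes the equivalence. The only nontrivial step is the monotonicity argument in the third paragraph; once $\hat\varrho_{tr}(\gamma^{*})=0$ has collapsed $q_{tr}(\gamma^{*})$ to an affine expression in $\hat p_{tr}(\gamma^{*})$, everything else is an algebraic identity, so I do not anticipate any real obstacle.
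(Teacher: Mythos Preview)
Your proof is correct and follows the same route as the paper: collapse $\hat\varrho_{tr}(\gamma^{*})$ to zero, reduce $q_{tr}(\gamma^{*})\geq0$ to $\hat p_{tr}(\gamma^{*})\geq p^a$, and then translate this into a condition on $p^{tr}$ via the defining relation for $\hat p_{tr}$. In fact your version is slightly cleaner than the paper's, since you make the strict monotonicity of $\Phi_{\gamma^{*}}$ explicit and thereby obtain both directions of the equivalence at once; the paper's argument, as written, only spells out the forward implication.
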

\begin{proof}
Since $\hat\varrho_{tr}\left(\frac{\sqrt{6}\bar f_c\varrho^{tr}}{2Gm_0}\right)=0$,  
$$q_{tr}\left(\frac{\sqrt{6}\bar f_c\varrho^{tr}}{2Gm_0}\right)\stackrel{(\ref{q_JG})}{=}\frac{m_0}{\bar f_c}\hat p_{tr}\left(\frac{\sqrt{6}\bar f_c\varrho^{tr}}{2Gm_0}\right)-1\geq 0.$$
Hence, $p^{crit}:=\hat p_{tr}\left(\frac{\sqrt{6}\bar f_c\varrho^{tr}}{2Gm_0}\right)\geq p^a$. Using the definitions of $\hat p_{tr}$ and $m_g$, we have
$$0=p^{crit}-\frac{\sqrt{6}\varrho^{tr}}{2Gm_0}Km_g'(p^{crit})-p^{tr}\geq p^{a}-\frac{\sqrt{6}\varrho^{tr}}{2Gm_0}Km_g'(p^{a})-p^{tr}.$$
\end{proof}

By Theorem \ref{lem_auxilliary_problem_JG} and Theorem \ref{th_solvability_JG}, the return-mapping scheme reads as follows.

\medskip\noindent
{\bf  Return to the smooth portion} 
\begin{enumerate}
\item {\it Necessary and sufficient condition}:  $q_{tr}(0)>0$ and $q_{tr}\left(\sqrt{6}\bar f_c\varrho^{tr}/2Gm_0\right)<0$.
\item {\it Find} $p\in\mathbb R$, $\varrho>0$ and $\triangle\lambda\in(0,\sqrt{6}\bar f_c\varrho^{tr}/2Gm_0)$:
\begin{equation}
\left.
\begin{array}{c}
{p}+\triangle\lambda K\frac{m_g'(p)}{\bar f_c} -p^{tr}=0,\\[5pt]
\varrho+\triangle\lambda 2G\left(\frac{3\varrho}{\bar f_c^2}+\frac{m_0}{\sqrt{6}\bar f_c}\right)-\varrho^{tr}=0,\\[5pt]
\frac{3}{2}\left(\frac{\varrho}{\bar f_c}\right)^2+m_0\left(\frac{\varrho r_e^{tr}}{\sqrt{6}\bar f_c}+\frac{p}{\bar f_c}\right)-1=0.
\end{array}
\right\}
\label{lambda_problem1_JG}
\end{equation}
\item {\it Set}
\begin{equation}
\mbf{\sigma}=p \mbf I+\varrho\mbf {n}^{tr}.
\label{sigma_smooth_JG}
\end{equation}
\end{enumerate}

\medskip\noindent
{\bf  Return to the apex} 
\begin{enumerate}
\item {\it Necessary and sufficient condition}: $q_{tr}\left(\sqrt{6}\bar f_c\varrho^{tr}/2Gm_0\right)\geq0$. 
\item {\it Set}
\begin{equation}
p=\frac{\bar f_c}{m_0},\quad \varrho=0,\quad \mbf{\sigma}=p\mbf I,\quad \triangle\lambda=\frac{\bar f_c}{Km_g'(p)}(p^{tr}-p). 
\label{sigma_apex_JG}
\end{equation}
\end{enumerate}
The system (\ref{lambda_problem1_JG}) of nonlinear equations can be solved by the Newton method with the initial choice $p^0=p^{tr}$, $\varrho^0=\varrho^{tr}$, $\triangle\lambda^0=0$. It was shown that the system has a unique solution subject to $q_{tr}(0)>0$ and $q_{tr}\left(\sqrt{6}\bar f_c\varrho^{tr}/2Gm_0\right)<0$. Without these conditions, one cannot guarantee existence and uniqueness of the solution to (\ref{lambda_problem1_JG}).

\subsection{Stress-strain and consistent tangent operators}
\label{subsec_operator_JG}

Solving the problem  (\ref{k-step_problem_JG}), we obtain a nonlinear and implicit operator between the stress tensor, $\mbf\sigma=\mbf\sigma(t_k)$, and the strain tensor, $\mbf\varepsilon=\mbf\varepsilon(t_k)$. The stress-strain operator, $\mbf T$, also depends on $\mbf{\varepsilon}^p(t_{k-1})$ through the trial stress. To emphasize this fact we write $\mbf{\sigma}:=\mbf T(\mbf{\varepsilon};\mbf{\varepsilon}^p(t_{k-1}))$. We have 
\begin{equation}
\mbf T(\mbf{\varepsilon};\mbf{\varepsilon}^p(t_{k-1}))=\left\{
\begin{array}{c c l}
\mbf{\sigma}^{tr} &\mbox{if}& q_{tr}(0)\leq0,\\
p\mbf I+\varrho \mbf n^{tr} &\mbox{if}&q_{tr}(0)>0,\;q_{tr}\left(\sqrt{6}\bar f_c\varrho^{tr}/2Gm_0\right)<0,\\
\frac{\bar f_c}{m_0}\mbf I &\mbox{if}& \qquad \qquad \quad q_{tr}\left(\sqrt{6}\bar f_c\varrho^{tr}/2Gm_0\right)\geq0,
\end{array}
\right.
\label{T_def}
\end{equation}
where the function $q_{tr}$ is defined by (\ref{q_JG}) and $p$, $\varrho$ are components of the solution to (\ref{lambda_problem1_JG}). 

\begin{theorem}
The function $\mbf T$ is strongly semismooth in $\mathbb R^{3\times3}_{sym}$ with respect to $\mbf\varepsilon$.
\label{th_semi_JG}
\end{theorem}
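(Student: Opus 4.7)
The approach parallels the proof of Theorem \ref{th_semi_DP}: combine the semismooth implicit function theorem (Theorem \ref{implicit}) with Proposition \ref{prop_semi}. Since $\mbf\sigma^{tr}$ depends affinely on $\mbf\varepsilon$, it suffices to show that the reduced operator $\mbf S$ in (\ref{T_def}) is strongly semismooth as a function of $\mbf\sigma^{tr}$.

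First, I would show that the scalar function $q_{tr}$ from (\ref{q_JG}) is strongly semismooth in $(\gamma,\mbf\sigma^{tr})$. The function $\hat\varrho_{tr}$ is strongly semismooth by virtue of the closed form (\ref{hat_varrho}), which is a positive part of a smooth function. For $\hat p_{tr}$, the defining implicit equation has partial derivative $1+\gamma K m_g''(p)/\bar f_c$ with respect to $p$, which is strictly positive since $m_g''>0$ by (\ref{m_g}); thus the classical implicit function theorem yields $\hat p_{tr}\in C^\infty$ in $(\gamma,p^{tr})$. The Lode factor $r_e(\cos\theta(\mbf\sigma^{tr}))$ is smooth on $\{\varrho^{tr}>0\}$, and the product $\hat\varrho_{tr}(\gamma)\,r_e^{tr}$ appearing in $q_{tr}$ vanishes on $\{\varrho^{tr}=0\}$; by the reasoning behind Proposition \ref{prop_semi} (applied to the $\hat\varrho_e$-type structure), this product is strongly semismooth across that set. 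Sums and compositions then deliver strong semismoothness of $q_{tr}$.

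Second, I would introduce $\triangle\lambda(\mbf\sigma^{tr})$ as the unique root of $q_{tr}(\triangle\lambda)=0$ when $q_{tr}(0)>0$ and as $0$ otherwise. By Theorem \ref{th_solvability_JG}, $q_{tr}$ is strictly decreasing in $\gamma$, so the full-rank condition (\ref{full_rank}) holds at every relevant point. Theorem \ref{implicit}, combined with a positive-part extension to absorb the elastic branch exactly as in Theorem \ref{th_semi_DP}, then yields strong semismoothness of $\triangle\lambda$.

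Finally, I would set $\hat p(\mbf\sigma^{tr}):=\hat p_{tr}((\triangle\lambda)^+)$ and $\hat\varrho(\mbf\sigma^{tr}):=\hat\varrho_{tr}((\triangle\lambda)^+)$. Both are strongly semismooth as compositions of strongly semismooth functions, and they reproduce each of the three branches of (\ref{T_def}): the elastic branch ($\hat p=p^{tr}$, $\hat\varrho=\varrho^{tr}$), the smooth-return branch (the solution of (\ref{lambda_problem1_JG})), and the apex branch ($\hat p=\bar f_c/m_0$, $\hat\varrho=0$). Moreover $\hat\varrho$ vanishes whenever $\varrho^{tr}=0$, which is immediate from (\ref{hat_varrho}). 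Proposition \ref{prop_semi} applied to $\mbf S(\mbf\sigma^{tr})=\hat p(\mbf\sigma^{tr})\mbf I+\hat\varrho(\mbf\sigma^{tr})\mbf n(\mbf\sigma^{tr})$ then gives strong semismoothness of $\mbf S$, and therefore of $\mbf T$. The main obstacle I anticipate is the strong-semismoothness analysis of $q_{tr}$ across $\{\varrho^{tr}=0\}$, an extra layer compared with the Drucker-Prager setting, where the analogue of $q_{tr}$ is already available in closed form; however, on that set $q_{tr}(0)\leq 0$ and the elastic branch activates automatically, so the delicate behaviour of $r_e^{tr}$ is always neutralised by the vanishing of $\hat\varrho_{tr}$.
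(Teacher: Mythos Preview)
Your proposal is correct and follows essentially the same route as the paper: establish strong semismoothness of $q_{tr}$ via the closed form (\ref{hat_varrho}) and the $\hat\varrho_e$-part of Proposition \ref{prop_semi}, then apply Theorem \ref{implicit} to obtain semismoothness of $\triangle\lambda$, and conclude via Proposition \ref{prop_semi} exactly as in Theorem \ref{th_semi_DP}. One small correction to your closing remark: it is \emph{not} true that $q_{tr}(0)\leq 0$ whenever $\varrho^{tr}=0$ (take $\varrho^{tr}=0$ and $p^{tr}>\bar f_c/m_0$, which forces the apex return), but this is immaterial---your second observation, that $\hat\varrho_{tr}$ vanishes there and hence neutralises $r_e^{tr}$, is the relevant one and already suffices.
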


\begin{proof}
Consider the function $\triangle\lambda=\triangle\lambda(\mbf{\sigma}^{tr})$ satisfying $\triangle\lambda=0$ if $q_{tr}(0)\leq0$, otherwise $q_{tr}(\triangle\lambda)=0$. To apply Theorem \ref{implicit} on the implicit function $q_{tr}(\gamma):=q(\gamma;p^{tr},\varrho^{tr})$, it is necessary to show that $q$ is strongly semismooth w.r.t. the variables $\gamma,p^{tr},\varrho^{tr}$. This follows from (\ref{hat_varrho}) and Proposition \ref{prop_semi}. The rest of the proof coincides with the proof of Theorem \ref{th_semi_DP}.
\end{proof}

If $q_{tr}(0)=0$ or $q_{tr}\left(\sqrt{6}\bar f_c\varrho^{tr}/2Gm_0\right)=0$ then $\mbf T$ is not smooth. We derive the derivative $\partial\mbf\sigma/\partial\mbf\varepsilon$ under the assumption that any of these conditions does not hold. If $q_{tr}(0)<0$ (elastic response) then $\partial\mbf\sigma/\partial\mbf\varepsilon=\mathbb D_e$. If $q_{tr}\left(\sqrt{6}\bar f_c\varrho^{tr}/2Gm_0\right)>0$ (return to the apex) then $\partial\mbf\sigma/\partial\mbf\varepsilon=\mathbb O$ (vanishes).

Let $q_{tr}(0)>0$ and $q_{tr}\left(\sqrt{6}\bar f_c\varrho^{tr}/2Gm_0\right)<0$, i.e., return to the smooth portion happens. Then the derivative $\partial\mbf\sigma/\partial\mbf\varepsilon$ can be found as follows.
\begin{enumerate}
\item Find the solution $(p,\varrho,\triangle\lambda)$ to (\ref{lambda_problem1_JG}).
\item Use (\ref{inv_der1}), (\ref{inv_der2}), (\ref{elastic_law}) and the chain rule and compute:
$$\frac{\partial p^{tr}}{\partial\mbf{\varepsilon}}=K\mbf I,\quad
\frac{\mbf{s^{tr}}}{\partial\mbf{\varepsilon}}=2G\mathbb I_{dev},\quad
\frac{\partial \varrho^{tr}}{\partial\mbf{\varepsilon}}=2G\mbf{n^{tr}},\quad
\frac{\partial\mbf{n^{tr}}}{\partial\mbf{\varepsilon}}=\frac{2G}{\varrho^{tr}}\left(\mathbb I_{dev}-\mbf{n^{tr}}\otimes\mbf{n^{tr}}\right),$$ 
$$\frac{\partial \theta^{tr}}{\partial\mbf{\varepsilon}}=\frac{2G\sqrt{6}}{\varrho^{tr}\sin(3\theta^{tr})}\left[(\mbf{n^{tr}}\otimes(\mbf{n^{tr}})^3)\mbf I-\mathbb I_{dev}(\mbf{n^{tr}})^2\right],\quad \frac{\partial r_e^{tr}}{\partial\mbf{\varepsilon}}=-r_e'(\cos\theta^{tr})\sin\theta^{tr}\frac{\partial \theta^{tr}}{\partial\mbf{\varepsilon}}.$$
\item Compute: 
\begin{equation}
\left(
\begin{array}{c}
\frac{\partial p}{\partial\mbf{\varepsilon}}\\[5pt]
\frac{\partial \varrho}{\partial\mbf{\varepsilon}}\\[5pt]
\frac{\partial \triangle\lambda}{\partial\mbf{\varepsilon}}
\end{array}
\right)=\left(
\begin{array}{c c c}
1+\triangle\lambda K\frac{m_g''(p)}{\bar f_c} & 0 & K\frac{m_g'(p)}{\bar f_c}\\
0 & 1+\triangle\lambda \frac{6G}{\bar f_c^2} & 2G\left(\frac{3\varrho}{\bar f_c^2}+\frac{m_0}{\sqrt{6}\bar f_c}\right)\\
\frac{m_0}{\bar f_c} & \frac{3\varrho}{\bar f_c^2}+\frac{m_0r_e^{tr}}{\sqrt{6}\bar f_c} & 0
\end{array}
\right)^{-1}\left(\begin{array}{c}
\frac{\partial p^{tr}}{\partial\mbf{\varepsilon}}\\[5pt]
\frac{\partial \varrho^{tr}}{\partial\mbf{\varepsilon}}\\[5pt]
-\frac{m_0\varrho}{\sqrt{6}\bar f_c}\frac{\partial r_e^{tr}}{\partial\mbf{\varepsilon}}
\end{array}\right).
\label{linearized_system}
\end{equation}
Notice that the matrix in (\ref{linearized_system}) arises from  linearization of (\ref{lambda_problem1_JG}) around the solution $(p,\varrho,\triangle\lambda)$. The matrix is invertible since its determinant is negative.
\item Compute
\begin{equation}
\frac{\partial\mbf\sigma}{\partial\mbf\varepsilon}=\mbf I\otimes\frac{\partial p}{\partial\mbf{\varepsilon}}+\mbf n^{tr}\otimes\frac{\partial \varrho}{\partial\mbf{\varepsilon}}+\varrho\frac{\partial\mbf{n^{tr}}}{\partial\mbf{\varepsilon}}.
\label{deriv_smooth_JG}
\end{equation}
\end{enumerate} 

For numerical purposes, we use the following generalized consistent tangent operator:
\begin{equation}
\mathbb T^o(\mbf{\varepsilon};\mbf{\varepsilon}^p(t_{k-1}))=\left\{
\begin{array}{c c l}
\mathbb D_e &\mbox{if}& q_{tr}(0)\leq0,\\[2pt]
(\ref{deriv_smooth_JG}) &\mbox{if}&q_{tr}(0)>0,\;q_{tr}\left(\sqrt{6}\bar f_c\varrho^{tr}/2Gm_0\right)<0,\\[2pt]
\mathbb O &\mbox{if}&  \qquad \qquad \quad q_{tr}\left(\sqrt{6}\bar f_c\varrho^{tr}/2Gm_0\right)\geq0.
\end{array}
\right.
\end{equation}

%********************** Section 5**********************************************************************

\section{An abstract model}
\label{sec_general}

The aim of this section is an extension of Theorem \ref{lem_auxilliary_problem} and \ref{lem_auxilliary_problem_JG} on a specific class of elastoplastic models that are usually formulated in the Haigh-Westergaard coordinates. We consider an abstract model containing the isotropic hardening and the plastic flow pseudo-potential.
\textit{Given the history of the strain tensor $\mbf\varepsilon=\mbf\varepsilon(t)$, $t\in[t_0, t_{\max}]$, and the initial values
$$\mbf{\varepsilon}^p(t_0)=\mbf{\varepsilon}^p, \;\; \bar\varepsilon^p(t_0)=\bar\varepsilon^p_0.$$
Find the generalized stress $(\mbf\sigma(t),\kappa(t))$ and the generalized strain $(\mbf{\varepsilon}^p(t), \bar\varepsilon^p(t))$ such that 
$$\left.
\begin{array}{l}
\mbf\varepsilon=\mbf{\varepsilon}^e+\mbf{\varepsilon}^p,\\
\mbf\sigma=\mathbb D_e:\mbf{\varepsilon}^e,\;\; \kappa=H(\bar\varepsilon^p),\\
\dot{\mbf{\varepsilon}}^p\in\dot\lambda \partial_\sigma g(\mbf\sigma,\kappa),\\
\dot{\bar{\varepsilon}}^p=\dot\lambda \ell(\mbf\sigma,\kappa),\\
\dot\lambda\geq0,\;\; f(\mbf\sigma,\kappa)\leq0,\;\; \dot\lambda f(\mbf\sigma,\kappa)=0.
\end{array}
\right\}$$
hold for each instant $t\in[t_0,t_{\max}]$}. 

Further, we have the following assumptions on ingredients of the model:
\begin{enumerate}
\item $f(\mbf\sigma,\kappa)=\hat f(p(\mbf\sigma),\varrho(\mbf\sigma),\varrho_e(\mbf\sigma),\kappa),$
where $\hat f$ is increasing with respect to $\varrho$ and $\tilde\varrho$, convex and continuously differentiable at least in vicinity of the yield surface. 
\item $g(\mbf\sigma,\kappa)=\hat g(p(\mbf\sigma),\varrho(\mbf\sigma),\kappa),$
where $\hat g$ is an increasing function with respect to $\varrho$, convex and twice continuously differentiable at least in vicinity of the yield surface. 
\item $H$ is a nondecreasing, continuous and strongly semismooth function satisfying $H(0)=0$.
\item $\ell(\mbf\sigma,\kappa)= \hat \ell(p(\mbf\sigma),\varrho(\mbf\sigma),\tilde\varrho(\mbf\sigma),\kappa)$ is a positive function.
\item Invariants $p$, $\varrho$, $\varrho_e$ and $\tilde\varrho$ are the same as in Section \ref{sec_preliminaries}.
\end{enumerate}
Notice that the assumptions on $\hat f$ and $\hat g$ guarantee convexity of $f$ and $g$ using properties of $r_e$ introduced in \cite{WW74}.
Let
$\hat g_V(p,\varrho):=\frac{\partial \hat g}{\partial p},\; \hat g_\varrho(p,\varrho):=\frac{\partial \hat g}{\partial \varrho}.$
Then one can write the plastic flow rule as follows:
\begin{equation*}
\dot{\mbf{\varepsilon}}^p=\dot\lambda\left[\hat g_V(p,\varrho)\mbf I/3+\hat g_\varrho(p,\varrho)\hat{\mbf n}\right],\quad \hat{\mbf n}\in\partial\varrho(\mbf\sigma).
\label{flow_rule2_abstract}
\end{equation*}

The $k$-th step of the incremental constitutive problem received by the implicit Euler method reads as follows. \textit{Given $\mbf{\varepsilon}:=\mbf{\varepsilon}(t_k)$, $\mbf{\sigma}^{tr}:=\mathbb D_e:(\mbf{\varepsilon}(t_k)-\mbf{\varepsilon}^p(t_{k-1}))$ and $\bar{\varepsilon}^{p,tr}:=\bar{\varepsilon}^p(t_{k-1})$. Find $\mbf{\sigma}=\mbf{\sigma}(t_k)$,  $\bar{\varepsilon}^{p}=\bar{\varepsilon}^{p}(t_k)$ and $\triangle\lambda$ satisfying:
\begin{equation}
\left.
\begin{array}{c}
\mbf{\sigma}=\mbf{\sigma}^{tr}-\triangle\lambda\left[K\hat g_V(p(\mbf{\sigma}),\varrho(\mbf{\sigma}))\mbf I+2G\hat g_\varrho(p(\mbf{\sigma}),\varrho(\mbf{\sigma}))\hat{\mbf n}\right],\quad \hat{\mbf n}\in\partial\varrho(\mbf\sigma),\\[3pt]
\bar{\varepsilon}^{p}=\bar{\varepsilon}^{p,tr}+\triangle\lambda \ell(\mbf\sigma,\kappa),\\[3pt]
\triangle\lambda\geq0,\quad f(\mbf{\sigma},H(\bar{\varepsilon}^{p}))\leq0,\quad \triangle\lambda f(\mbf{\sigma},H(\bar{\varepsilon}^{p}))=0.
\end{array}
\right\}
\label{k-step_problem_abstract}
\end{equation}}

If we use the elastic predictor/plastic corrector method then we derive the following straightforward extension of Theorem \ref{lem_auxilliary_problem} and Theorem \ref{lem_auxilliary_problem_JG} within the plastic correction.

\begin{theorem}
Let $f(\mbf{\sigma}^{tr},H(\bar{\varepsilon}^{p,tr}))>0$. If $(\mbf{\sigma},\bar{\varepsilon}^{p},\triangle\lambda)$ is a solution to problem (\ref{k-step_problem_abstract}) then $(p,\varrho,\bar{\varepsilon}^{p},\triangle\lambda)$, $\triangle\lambda>0$, is a solution to the following system:
\begin{equation}
\left.
\begin{array}{c}
{p}=p^{tr}-\triangle\lambda K\hat g_V(p,\varrho),\\[3pt]
\varrho=\left[\varrho^{tr}-\triangle\lambda2G\hat g_\varrho(p,\varrho)\right]^+,\\[3pt]
\bar{\varepsilon}^{p}=\bar{\varepsilon}^{p,tr}+\triangle\lambda \hat\ell\left(p,\varrho,\varrho\tilde r(\cos \theta^{tr})\right),\\[3pt]
\hat f\left(p,\varrho,\varrho r_e(\cos \theta^{tr}),H(\bar{\varepsilon}^{p})\right)=0.
\end{array}
\right\}
\label{k-step_problem_abstract_red}
\end{equation}

Conversely, if $(p,\varrho,\kappa,\triangle\lambda)$ is a solution to (\ref{k-step_problem_abstract_red}) then $(\mbf{\sigma},\kappa,\triangle\lambda)$ solves (\ref{k-step_problem_abstract}), where
\begin{equation}
\mbf{\sigma}=\left\{
\begin{array}{c c l}
p\mbf I +\varrho\mbf{n}^{tr}&\mbox{if}& \varrho>0,\\[3pt]
p\mbf I &\mbox{if}& \varrho=0.
\end{array}
\right.
\label{sigma_k_abstract}
\end{equation}
\label{lem_auxilliary_problem_abstract}
\end{theorem}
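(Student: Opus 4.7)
The plan is to mirror the reduction used in Theorems \ref{lem_auxilliary_problem} and \ref{lem_auxilliary_problem_JG}, exploiting the same two structural facts: (i) the plastic corrector depends on $\mbf\sigma$ only through the invariants $p,\varrho$ and $\cos\theta$, and (ii) the deviatoric projection of the flow-rule equation makes $\mbf s$ and $\mbf s^{tr}$ collinear whenever $\varrho>0$.

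For the forward direction I would first split (\ref{k-step_problem_abstract})$_1$ into its volumetric and deviatoric parts. Taking the trace of both sides and using $\mbf I:\hat{\mbf n}=0$ (cf.\ (\ref{rho_subgrad_prop})) yields directly (\ref{k-step_problem_abstract_red})$_1$. For the deviatoric part, $\mbf s=\mbf s^{tr}-\triangle\lambda\, 2G\hat g_\varrho(p,\varrho)\hat{\mbf n}$ with $\hat{\mbf n}\in\partial\varrho(\mbf\sigma)$. As in Section \ref{subsec_const_sol}, if $\varrho>0$ then $\hat{\mbf n}=\mbf s/\varrho$, so $\mbf s$ and $\mbf s^{tr}$ are positively proportional, giving $\mbf n=\mbf n^{tr}$ and $\varrho=\varrho^{tr}-\triangle\lambda\,2G\hat g_\varrho(p,\varrho)$. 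If $\varrho=0$ then $\|\hat{\mbf n}\|\le 1$ by (\ref{rho_subgrad}), which gives $\varrho^{tr}\le\triangle\lambda\,2G\hat g_\varrho(p,0)$. Both sub-cases collapse into the single max-plus identity (\ref{k-step_problem_abstract_red})$_2$ exactly as in (\ref{flow_rho})--(\ref{flow_rho_improve}). The hardening identity (\ref{k-step_problem_abstract_red})$_3$ and the equation $\hat f=0$ in (\ref{k-step_problem_abstract_red})$_4$ then follow from (\ref{k-step_problem_abstract})$_2$ and the complementarity in (\ref{k-step_problem_abstract})$_3$, combined with the fact that for $\varrho>0$ we have $\theta=\theta^{tr}$ (since $\mbf n=\mbf n^{tr}$), while for $\varrho=0$ the invariants $\varrho r_e(\cos\theta^{tr})$ and $\varrho\,\tilde r(\cos\theta^{tr})$ vanish by boundedness of $r_e$ and $\tilde r$, so writing $\theta^{tr}$ in place of the undefined $\theta$ is legitimate. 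The assumption $f(\mbf\sigma^{tr},H(\bar\varepsilon^{p,tr}))>0$ rules out the elastic branch and forces $\triangle\lambda>0$.

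For the converse direction I would start from a solution $(p,\varrho,\bar\varepsilon^p,\triangle\lambda)$ of (\ref{k-step_problem_abstract_red}) and define $\mbf\sigma$ by (\ref{sigma_k_abstract}). It is then routine to check that $p(\mbf\sigma)=p$, $\varrho(\mbf\sigma)=\varrho$, and, when $\varrho>0$, $\mbf n(\mbf\sigma)=\mbf n^{tr}$ and $\theta(\mbf\sigma)=\theta^{tr}$. Substituting these values back reconstructs the original update equations: the volumetric part is immediate; for the deviatoric part, when $\varrho>0$ one picks $\hat{\mbf n}=\mbf n^{tr}\in\partial\varrho(\mbf\sigma)$, while when $\varrho=0$ the reduced equation yields $\varrho^{tr}\le\triangle\lambda\,2G\hat g_\varrho(p,0)$, so the vector $\hat{\mbf n}:=\mbf s^{tr}/(\triangle\lambda\,2G\hat g_\varrho(p,0))$ has norm at most one, lies in $\partial\varrho(\mbf\sigma)$ by (\ref{rho_subgrad}), and satisfies the flow-rule equation. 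The hardening identity and the yield condition translate back without modification.

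The only delicate point, and the one I expect to require the most care, is the interplay between the Lode-angle argument and the degenerate case $\varrho=0$: the yield and pseudo-potential functions involve $\theta$, which is ill-defined at $\varrho=0$, and one must justify replacing $\cos\theta$ by $\cos\theta^{tr}$ in (\ref{k-step_problem_abstract_red})$_3$ and (\ref{k-step_problem_abstract_red})$_4$ in a continuous fashion. This is handled exactly as in the proof of Theorem \ref{lem_auxilliary_problem_JG}: boundedness of $r_e(\cos(\cdot))$ and $\tilde r(\cos(\cdot))$ ensures $\varrho r_e(\cos\theta^{tr})\to 0$ and $\varrho\tilde r(\cos\theta^{tr})\to 0$ as $\varrho\to 0_+$, so the reduced system extends continuously to the apex and the replacement is consistent with the subdifferential formulation of the flow rule.
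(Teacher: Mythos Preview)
Your proposal is correct and follows precisely the approach indicated by the paper, which presents this theorem as a straightforward extension of Theorems \ref{lem_auxilliary_problem} and \ref{lem_auxilliary_problem_JG} and omits an explicit proof. The volumetric/deviatoric split, the collinearity argument yielding $\mbf n=\mbf n^{tr}$ and $\theta=\theta^{tr}$ for $\varrho>0$, the collapse of the two sub-cases into the max-plus identity, and the boundedness argument for $r_e$ and $\tilde r$ at the apex are exactly the ingredients the paper uses in the earlier proofs and expects here.
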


Notice that it is generally impossible to a priori decide about the type of the return as in the models introduced above. To be in accordance with the current approach introduced e.g. in \cite{GJ06} one can split (\ref{k-step_problem_abstract_red}) into the following two systems:
\begin{equation}
\left.
\begin{array}{lcl}
{p}+\triangle\lambda K\hat g_V(p,0)&=&p^{tr}\\
\bar{\varepsilon}^{p}-\triangle\lambda \hat \ell\left(p,0,0\right)&=&\bar{\varepsilon}^{p,tr}\\
\hat f\left(p,0,0,H(\bar{\varepsilon}^{p})\right)&=&0
\end{array}
\right\}
\quad\mbox{for } \varrho=0 \;\;\mbox{(\textit{return to the apex (apices)})},
\label{f_equality_apex}
\end{equation}
\begin{equation}
\left.
\begin{array}{lcl}
{p}+\triangle\lambda K\hat g_V(p,\varrho)&=&p^{tr}\\
\varrho+\triangle\lambda2G\hat g_\varrho(p,\varrho)&=&\varrho^{tr}\\
\bar{\varepsilon}^{p}-\triangle\lambda \hat \ell\left(p,\varrho,\varrho\tilde r(\cos \theta^{tr})\right)&=&\bar{\varepsilon}^{p,tr}\\
\hat f\left(p,\varrho,\varrho r_e(\cos \theta^{tr}),H(\bar{\varepsilon}^{p})\right)&=&0
\end{array}
\right\}
\quad\mbox{for } \varrho>0\;\;\mbox{(\textit{return to the smooth portion})},
\label{f_equality_smooth}
\end{equation}
and guess which of these systems provides an admissible solution. Beside the blind guessing, the current approach has another drawback: it can happen that (\ref{k-step_problem_abstract_red}) has a unique solution and mutually one of the systems (\ref{f_equality_apex}), (\ref{f_equality_smooth}) does not have any solution or have more than one solution. Therefore, we recommend to solve (\ref{k-step_problem_abstract_red}) directly by a nonsmooth version of the Newton method with the standard initial choice $p^0=p^{tr}$, $\varrho^0=\varrho^{tr}$, $\kappa^0=\kappa^{tr}$ and $\triangle\lambda^0=0$.

%********************** Section 6**********************************************************************

\section{Numerical realization of the incremental boundary value elastoplastic problem}
\label{sec_realization}

Consider an elasto-plastic body occupying  a bounded domain $\Omega\subseteq \mathbb{R}^3$ with the Lipschitz continuous boundary $\Gamma$. It is assumed that $\Gamma = \overline{\Gamma}_D \cup \overline{\Gamma}_N$, where $\Gamma_D$ and $\Gamma_N$ are open and disjoint sets. On $\Gamma_D$, the homogeneous Dirichlet boundary condition is prescribed. Surface tractions of density $\mbf f_t$ are applied on $\Gamma_N$ and the body is subject to a volume force $\mbf f_V$.

Notice that the above defined stress, strain and hardening variables depend on the spatial variable $\mbf x\in \Omega$, i.e. $\mbf{\sigma}_k=\mbf{\sigma}_k(\mbf x)$, etc. 
Let $\mathcal{V}:= \left\{\mbf v\in [H^1(\Omega)]^3\ |\;\mbf v = \mbf 0\ \mbox{on }\Gamma_D \right\}$
denote the space of kinematically admissible displacements. Under the infinitesimal small strain assumption, we have
$\mbf\varepsilon(\mbf v)=\frac{1}{2}\left(\nabla\mbf v+(\nabla\mbf v)^T\right),\; \mbf v\in \mathcal V.$

Substitution of the stress-strain operator $\mbf T$  into the principle of the virtual work leads to the following problem at the $k$-th step:
\begin{equation*}
(P_k)\quad \mbox{find } \mbf{u}_k\in\mathcal V: \quad \int_\Omega \mbf  T\left(\mbf\varepsilon(\mbf{u}_k);\mbf{\varepsilon}^p_{k-1},\bar\varepsilon_{k-1}^p\right):\mbf\varepsilon(\mbf v)\, \mbox{d}\Omega= \int_\Omega \mbf f_{V,k}.\mbf v\,d\Omega+\int_{\Gamma_N} \mbf f_{t,k}.\mbf v\,\mbox{d}\Gamma \quad\forall \mbf v\in \mathcal V,
\label{eqn}
\end{equation*} 
where $\mbf f_{V,k}$ and $\mbf f_{t,k}$ are the prescribed volume, and surface
forces at $t_k$, respectively. After finding a solution $\mbf u_k$, the
remaining unknown fields $\mbf{\varepsilon}^p_{k},\bar\varepsilon_{k}^p$
important for the next step can be computed at the level of integration points.
Problem $(P_k)$ can be standardly written as the operator equation in the dual space $\mathcal V'$ to $\mathcal V$: $\mathcal F_k(\mbf u_k)=\ell_k$, where
\begin{eqnarray*}
\langle\mathcal F_k(\mbf u),\mbf v\rangle&=&\int_\Omega \mbf T\left(\mbf\varepsilon(\mbf{u});\mbf{\varepsilon}^p_{k-1},\bar\varepsilon_{k-1}^p\right):\mbf\varepsilon(\mbf v)\, \mbox{d}\Omega\quad\forall \mbf u, \mbf v\in\mathcal V,\\
\langle \ell_k,\mbf v\rangle&=&\int_\Omega \mbf f_{V,k}.\mbf v\,d\Omega+\int_{\Gamma_N} \mbf f_{t,k}.\mbf v\,\mbox{d}\Gamma \quad\forall \mbf v\in \mathcal V.
\end{eqnarray*}
Since we plan to use the semismooth Newton method, we also introduce the operator $\mathcal K_k:\mathcal V\rightarrow\mathcal L(\mathcal V,\mathcal V')$ as follows:
$$\langle\mathcal K_k(\mbf u)\mbf v,\mbf w\rangle=\int_\Omega \mathbb T^o\left(\mbf\varepsilon(\mbf{u});\mbf{\varepsilon}^p_{k-1},\bar\varepsilon_{k-1}^p\right)\mbf\varepsilon(\mbf v):\mbf\varepsilon(\mbf w)\, \mbox{d}\Omega\quad\forall \mbf u, \mbf v, \mbf w\in\mathcal V.$$

To discretize the problem in space we use the finite element method. Then the space $\mathcal V$ is approximated by a finite dimensional one, $\mathcal V_h$. If linear simplicial elements are not used then it is also necessary to consider a suitable numerical quadrature on each element. Let $\mathcal F_{k,h}$, $\mathcal K_{k,h}$, $\ell_{k,h}$ denote the approximation of operators $\mathcal F_{k}$, $\mathcal K_{k}$, $\ell_{k}$, respectively, and $\mbf F_k:\mathbb R^n\rightarrow\mathbb R^n$, $\mbf K_k:\mathbb R^n\rightarrow\mathbb R^{n\times n}$, $\mbf l_k\in\mathbb R^n$ be their algebraic counterparts. Then the discretization of problem $(P_k)$ leads to the system of nonlinear equations, $\mbf{F}_{k}(\mbf{u}_{k})=\mbf{l}_{k}$, and 
the semismooth Newton method reads as follows:

\begin{algorithm}[Semismooth Newton method]
\hspace{0.2cm}
\begin{spacing}{1.2}
\begin{algorithmic}[1]
  \STATE initialization: $\mbf{u}_{k}^0=\mbf u_{k-1}$
  \FOR{$i=0,1,2,\ldots$}
    \STATE find $\delta \mbf{u}^{i}\in\mbf{V}$: $\;\mbf {K}_k(\mbf{u}_{k}^i)\delta  \mbf{u}^{i}=\mbf{l}_{k}-\mbf{F}_{k}(\mbf{u}_{k}^i)$
    \STATE compute $\mbf{u}_{k}^{i+1}=\mbf{u}_{k}^i+\delta \mbf{u}^{i}$
    \STATE {\bf if
    }{$\|\delta
    \mbf{u}^{i}\|/(\|\mbf{u}_{k}^{i+1}\|+\|\mbf{u}_{k}^i\|)\leq\epsilon_{Newton}$} {\bf then stop}
  \ENDFOR
  \STATE set $\mbf{u}_{k}=\mbf{u}_{k}^{i+1}$.
\end{algorithmic}
\end{spacing}
\end{algorithm}

If $\mbf T$ is strongly semismooth in $\mathbb R^{3\times 3}_{sym}$ then $\mbf F_k$ is strongly semismooth in $\mathbb R^n$. Notice that the strong semismoothness is an essential assumption for local quadratic convergence of this algorithm. In numerical examples introduced below, we observe local quadratic convergence when the tolerance is sufficiently small. In particular, we set $\epsilon_{Newton}=10^{-12}$.

%********************** Section 7**********************************************************************

\section{Numerical example - slope stability}
\label{sec_experiments}

The improved return-mapping schemes in combination with the semismooth Newton method have been partially implemented in codes SIFEL \cite{SIFEL} and MatSol \cite{MatSol-2009}. Here, for the sake of simplicity, we consider the slope stability benchmark \cite[Page 351]{NPO08} for the presented models, the Drucker-Prager (DP) and the Jirasek-Grassl (JG) ones. The benchmark is formulated as a plane strain problem. We focus on: a) incremental limit analysis and b) dependence of loading paths on element types and mesh density. For purposes of such an experiment, special MatLab codes have been prepared to be transparent. These experimental codes are available in \cite{Mcode} together with selected graphical outputs.

The geometry of the body is depicted in  Figure \ref{fig.mesh_P1} or \ref{fig.mesh_Q2}. The slope height is 10 m and its inclination is $45^\circ$. On the bottom, we assume that the body is fixed and, on the left and right sides,  zero normal displacements are prescribed. The body is subjected to self-weight. We set the specific weight $\rho g=20\, $kN/m$^3$ with $\rho$ being the mass density and $g$ the gravitational acceleration. Such a volume force is multiplied by a scalar factor, $\zeta$. The loading process starts from $\zeta=0$. The gravity load factor, $\zeta$, is then increased gradually until collapse occurs. The initial increment of the factor is set to 0.1. To illustrate loading responses we compute settlement at the corner point $A$ on the top of the slope depending on $\zeta$.

As in \cite[Page 351]{NPO08}, we set $E=20\,000\,$kPa, $\nu=0.49$, $\phi=20^{\circ}$ and $c=50\,$kPa, where $c$ denotes the cohesion for the perfect plastic model. Hence, $G=67\,114\, $kPa and $K=3\ 333\ 333\,$kPa. In comparison to \cite{NPO08}, we use the presented models instead of the Mohr-Coulomb one. The remaining parameters for these models will be introduced below.

We analyze the problem for linear triangular ($P1$) elements and eight-pointed quadrilateral ($Q2$) elements. In the latter case, $(3\times 3)$-point Gauss quadrature is used. For each element type, a hierarchy of four meshes with different densities is considered. The $P1$-meshes contain 3210, 12298, 48126, and 190121 nodal points, respectively. The $Q2$-meshes consist of 627, 2405, 9417, and 37265 nodal points, respectively. The coarsest meshes for $P1$ and $Q2$ elements are depicted in Figure \ref{fig.mesh_P1} and \ref{fig.mesh_Q2}. Let us complete that the mesh in Figure \ref{fig.mesh_P1} is uniform in vicinity of the slope and consists of right isoscales triangles with the same diagonal orientation. Further, it is worth mentioning that the $P1$-meshes are chosen much more finer in vicinity of the slope than their $Q2$-counterparts within the same level.

\begin{figure}[htbp]
\begin{minipage}[t]{0.47\textwidth}
  \center
   \includegraphics[width=\textwidth]{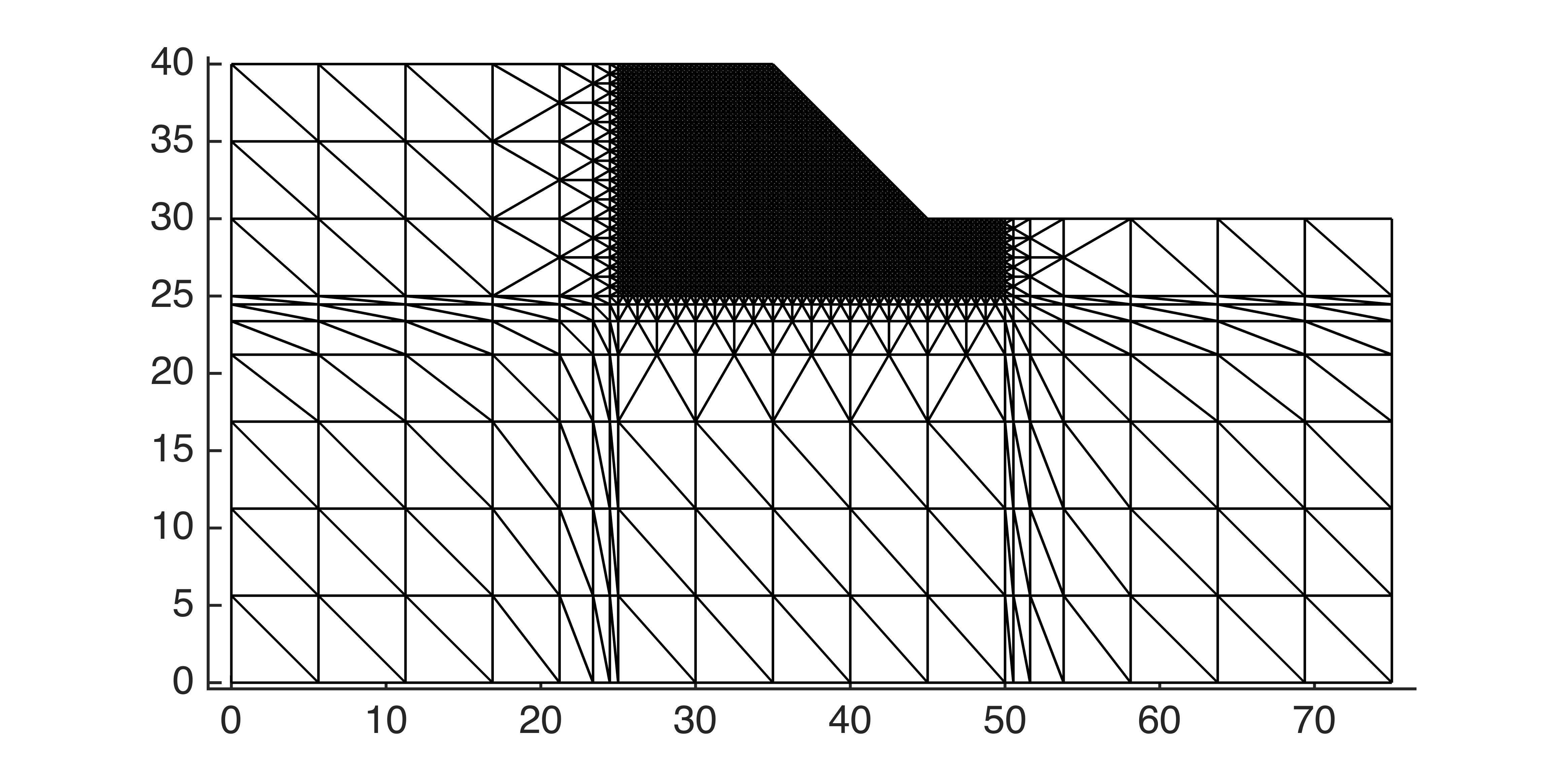}
   \caption{\small{The coarsest mesh for $P1$ elements.}}
   \label{fig.mesh_P1}
\end{minipage}
\hfill
\begin{minipage}[t]{0.47\textwidth}
  \center
  \includegraphics[width=\textwidth]{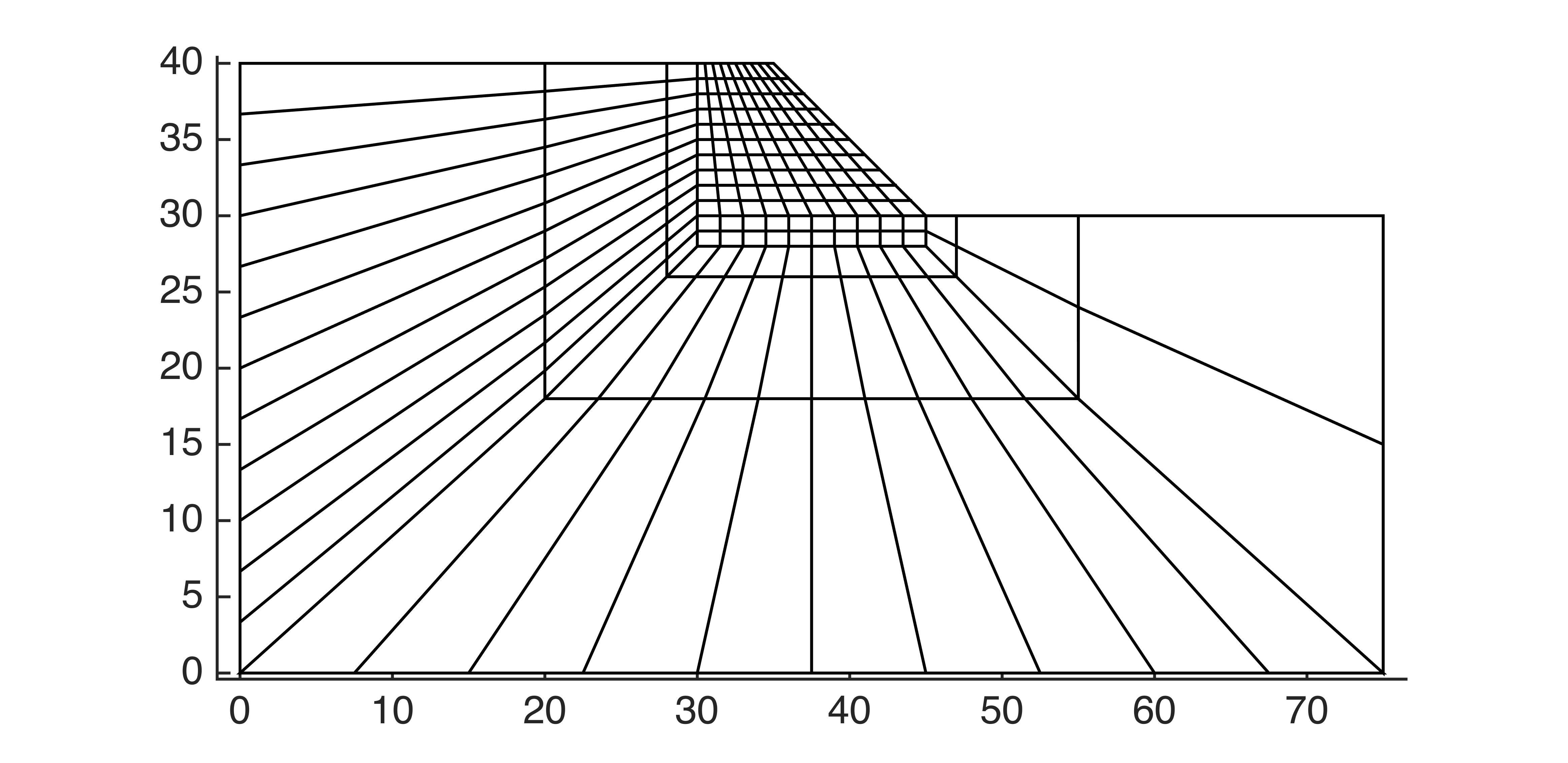}
   \caption{\small{The coarsest mesh for $Q2$ elements.}}
   \label{fig.mesh_Q2}
\end{minipage}
\end{figure}

\subsection{The Drucker-Prager model}

The Drucker-Prager parameters $\eta$, $\overline{\eta}$ and $\xi$ are computed from the friction angle, $\phi$, and the dilatancy angle, $\psi$, as follows \cite{NPO08}:
$$
  \eta = \frac{3\tan\phi}{\sqrt{9 + 12(\tan\phi)^2}},\ \ \overline{\eta} = \frac{3\tan\psi}{\sqrt{9 + 12(\tan\psi)^2}},\ \ \xi = \frac{3}{\sqrt{9 + 12(\tan\phi)^2}}.
$$
At first, we introduce results obtained for the model with associative perfect plasticity. In such a case, $\psi=\phi$, $c_0=c$ and $H=0\,$kPa. The received loading curves for the investigated meshes and elements are depicted in  Figure \ref{fig.load_path_h_P1} and \ref{fig.load_path_h_Q2}. Although $P1$-meshes are much finer, we observe more significant dependence of the curves on the mesh density for $P1$-elements than for $Q2$-elements. Also computed limit load factors are greater and tend more slowly to a certain value as $h\rightarrow0_+$ for $P1$-meshes than for $Q2$-meshes. The expected limit value  is 4.045 as follows from considerations introduced in \cite{CL90}. Using the finest $P1$ and $Q2$ meshes, we receive the values 4.230, and 4.056, respectively.

In general, higher order elements are recommended when a locking effect is expected. In this example, it can be caused due to the presence of the limit load and/or $\nu\approx 1/2$. On the other hand, the strong dependence on mesh density is influenced by other factors like mesh structure or choice of a model. For example, this dependence is not so significant for the Jirasek-Grassl model (see the next subsection).  Further, in \cite{HRS15}, there is theoretically justified and illustrated that the dependence of the limit load on the mesh density is minimal for bounded yield surfaces and that an approximation of unbounded yield surfaces by bounded ones (the truncation) leads to a lower bound of the limit load.

\begin{figure}[htbp]
\begin{minipage}[t]{0.47\textwidth}
  \center
   \includegraphics[width=\textwidth]{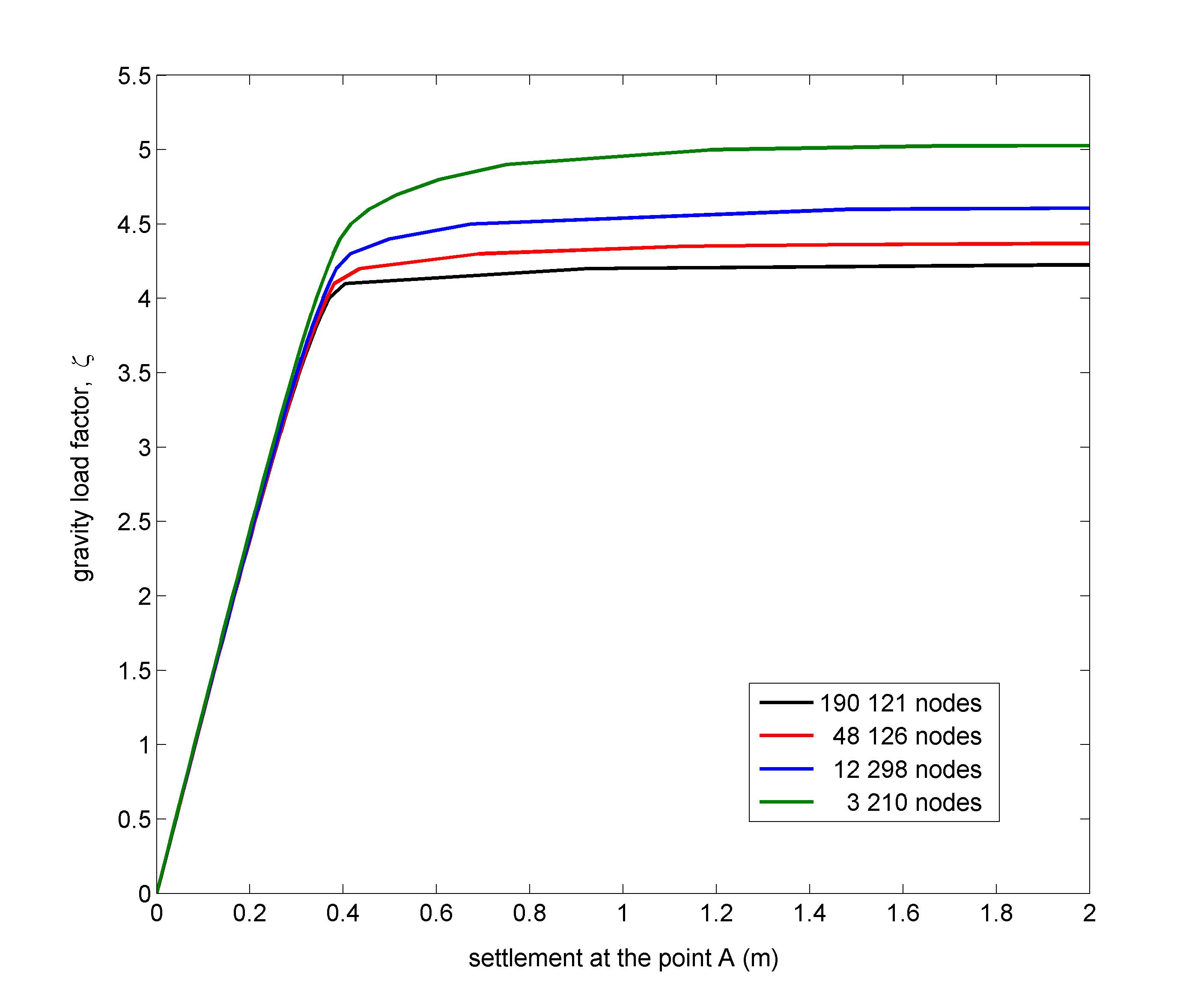}
   \caption{\small{Loading paths for the associative perfect plastic model and $P1$ elements.}}
   \label{fig.load_path_h_P1}
\end{minipage}
\hfill
\begin{minipage}[t]{0.47\textwidth}
  \center
  \includegraphics[width=\textwidth]{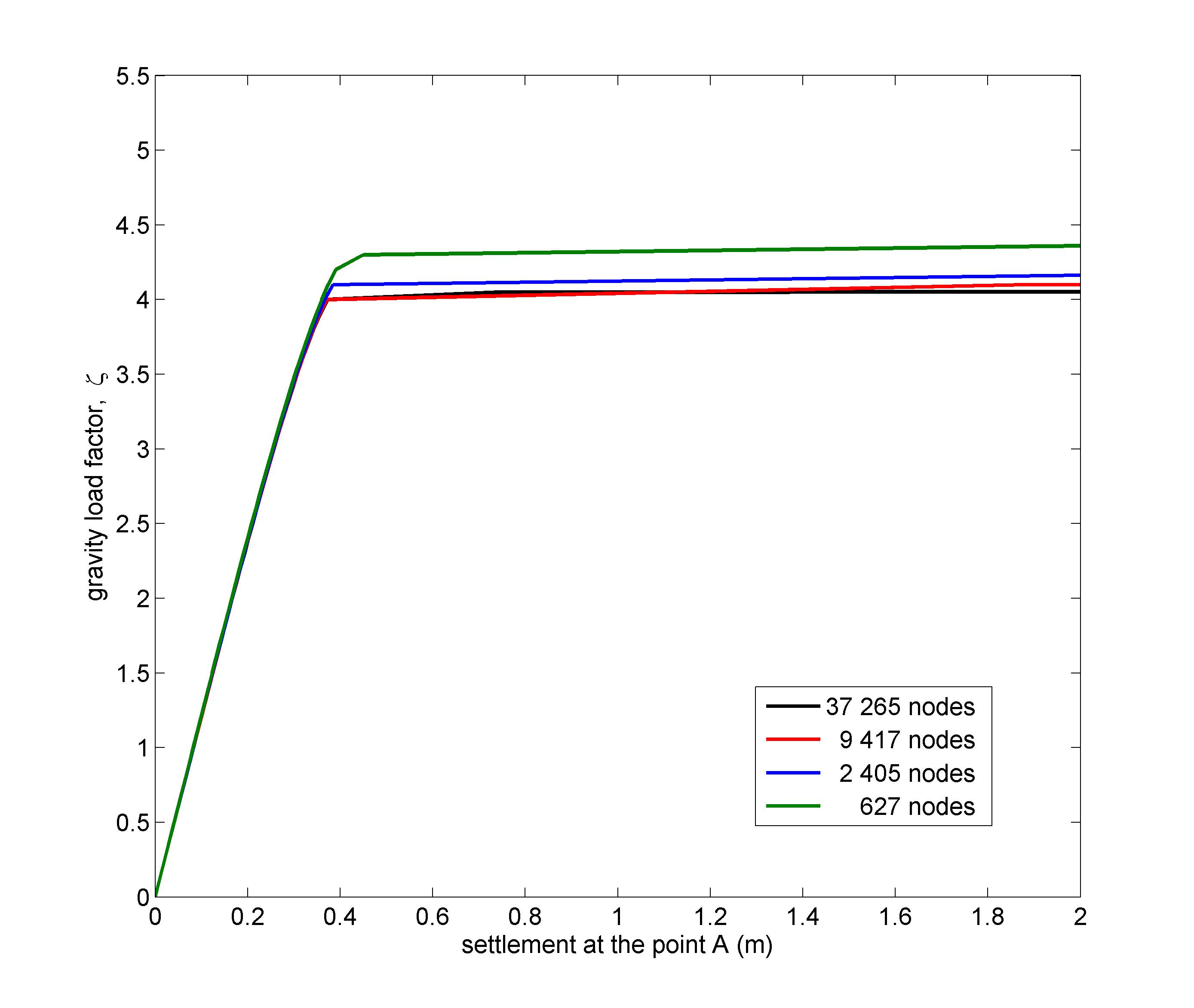}
   \caption{\small{Loading paths for the associative perfect plastic model and $Q2$ elements.}}
   \label{fig.load_path_h_Q2}
\end{minipage}
\end{figure}

For illustration, we add Figure \ref{fig.multiplier_perf_plas} and \ref{fig.displacement_perf_plas} with plastic multipliers and total displacements at collapse, respectively. The figures  are in accordance with literature. 

\begin{figure}[htbp]
\begin{minipage}[t]{0.47\textwidth}
  \center
  \includegraphics[width=\textwidth]{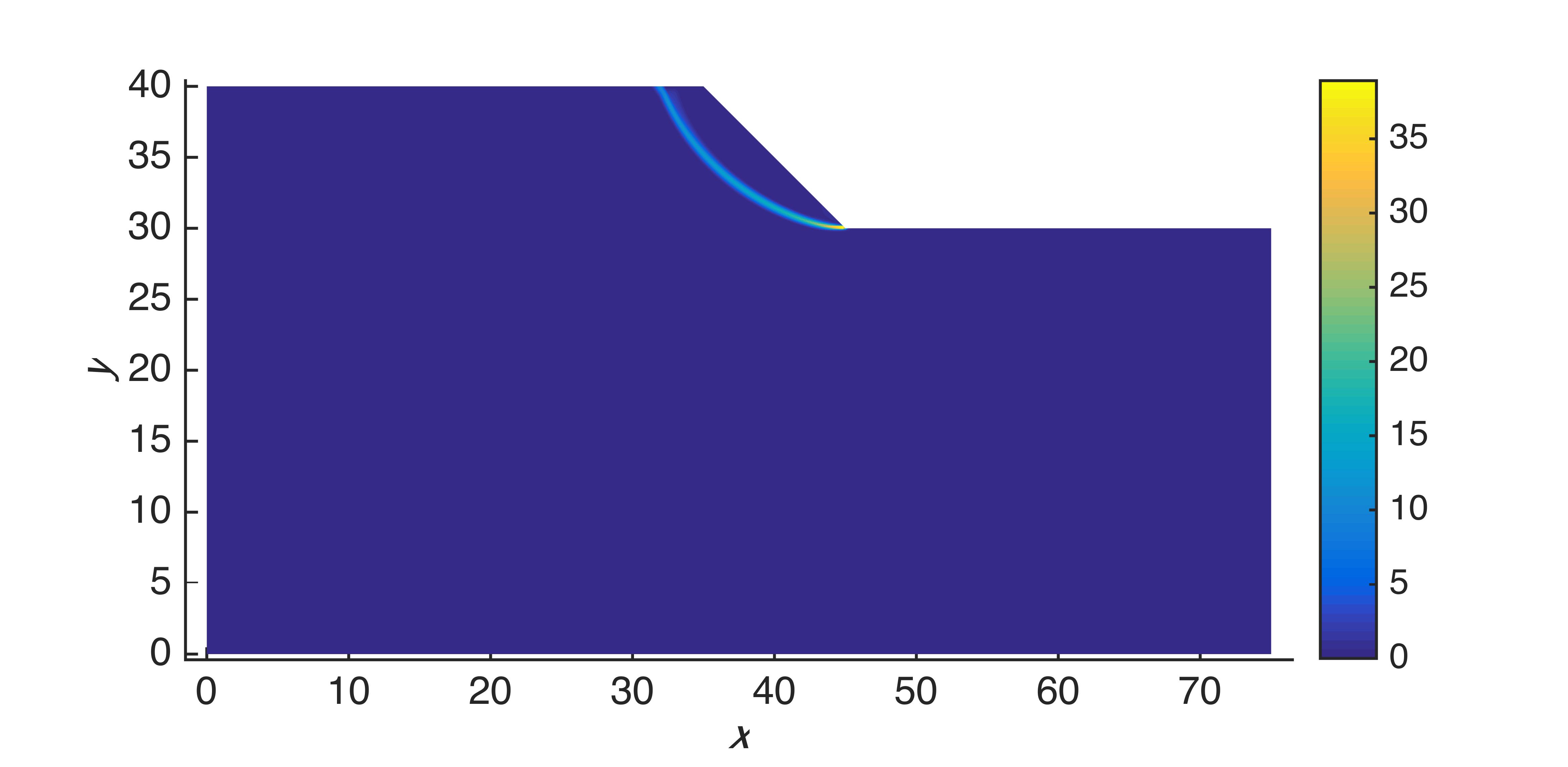}
   \caption{\small{Plastic multipliers at collapse for the finest $Q2$-mesh.}}
   \label{fig.multiplier_perf_plas}
\end{minipage}
\hfill
\begin{minipage}[t]{0.47\textwidth}
  \center
   \includegraphics[width=\textwidth]{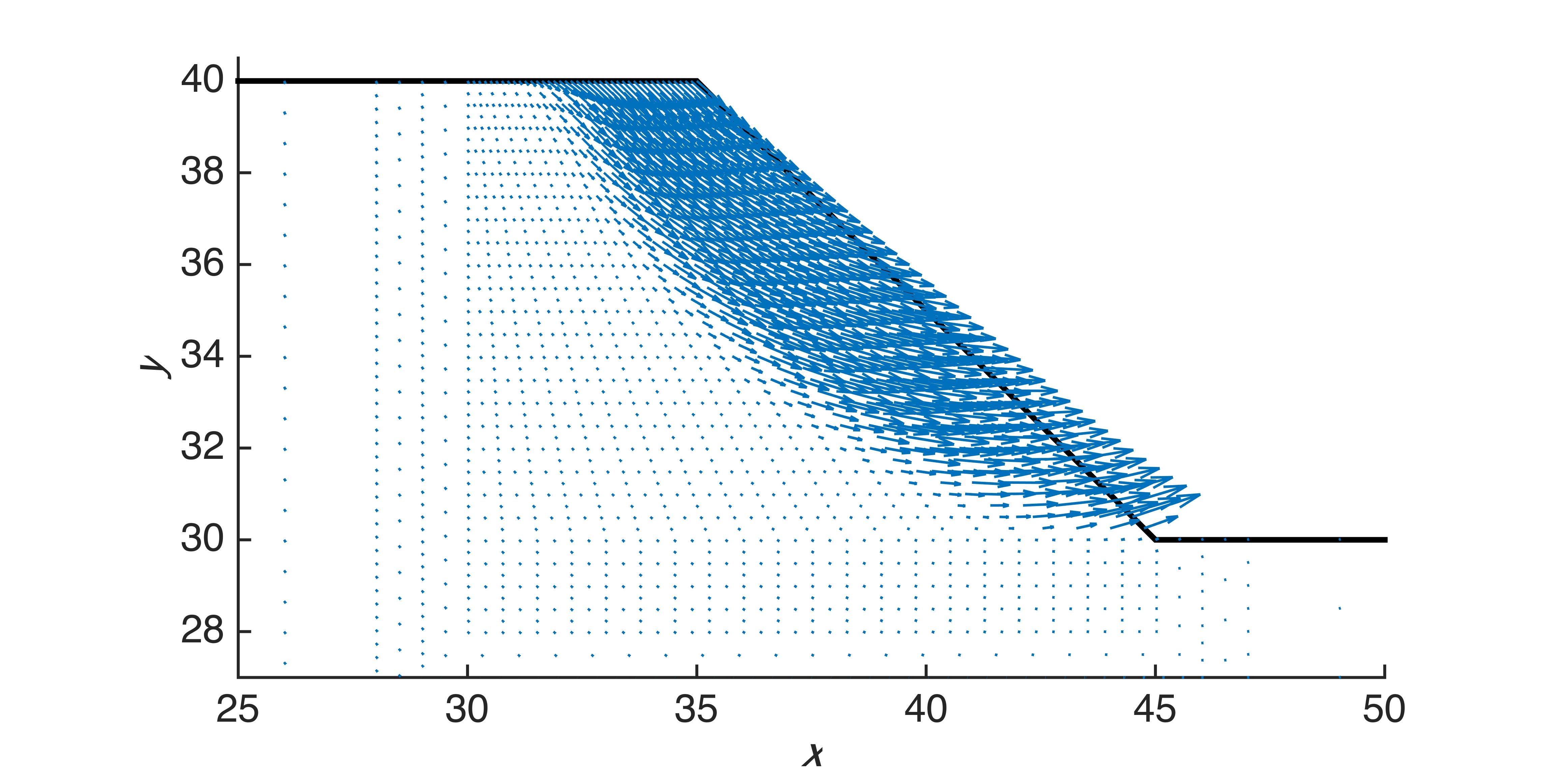}
   \caption{\small{Displacements at collapse (detail) for the $Q2$-mesh with 2405 nodes.}}
   \label{fig.displacement_perf_plas}
\end{minipage}
\end{figure}

To compare the current return-mapping scheme with improved one, we have also considered the nonassociative model with nonlinear hardening where 
$$\psi=10^{\circ},\; c_0=40\, \mbox{kPa}, \; \tilde H=10000\,\mbox{kPa},\; H(\bar\varepsilon^p)=\min\left\{c-c_0,\;\tilde H\bar\varepsilon^p-\frac{\tilde H^2}{4(c-c_0)}(\bar\varepsilon^p)^2\right\}.$$
Here, $\tilde H$ represents the initial slope of the hardening function and the material response is perfect plastic for sufficiently large values of the hardening variable.
This nonassociative model yields a slightly lower values of the limit load factors and also the other results are very similar to the associative model. The related graphical outputs are available in \cite[SS-DP-NH]{Mcode}. Further, in vicinity of the limit load, we have observed lower rounding errors for the improved return-mapping scheme and thus lower number of Newton steps is necessary to receive the prescribe tolerance than for the current scheme. However, the computational time for both schemes are practically the same since return to the apex happens only on a few elements lying in vicinity of the yield surface.

%\begin{remark}
%\emph{The limit load factor depends also on a structure of a mesh. For a regular
%mesh with diagonals from left-down to right-up, we have  observed higher limit
%values than for the investigated mesh structure mainly for coarser meshes.
%However, it seems that the differences are insignificant for
%sufficiently fine meshes.}
%\end{remark}

%\begin{remark}
%\emph{Limit values of the gravity load factor for various $h$ seem to be independent of the Young modulus $E$. It was tested for the value $E=20\,000$kPa introduced in %\cite{NPO08}.}
%\end{remark}

%\begin{remark}
%\emph{In the both experiments, we have observed superlinear convergence in last
%few Newton iterations. However, a decrease of the stopping criterion is not
%always monotone in vicinity of the limit load factor. It is likely
%caused by rounding errors since the Poisson ratio is close to 1/2.}
%\end{remark}

\subsection{The simplified Jirasek-Grassl model}

To be the simplified Jirasek-Grassl (JG) model applicable for the investigated soil material we fit its parameters using the associative perfect plastic Drucker-Prager (DP) model as follows:
$e=1$, $\bar f_c =3c\xi/(\sqrt{3}-\eta)$, $\bar f_t=0$, $B_g=1000$, $s=5$, $A_g=s\eta$ and $m_0=\sqrt{3}s-6$. 
Recall that $e=1$ implies $\varrho_e=\varrho$. Further the value of $\bar f_c$ corresponds to the uniaxial compressive strength computed from the Drucker-Prager model. To eliminate the 
influence of the exponential term in the function $m_g$, the value of $B_g$ is chosen sufficiently large. Then the model is insensitive on $\bar f_t$ and one can vanish it. Finally, we require the same flow direction for both the models under the uniaxial compressive strength.
Since the yield function in the JG model is normalized in comparison to the DP model it is convenient to introduce the following relation between the plastic multipliers: $\triangle\lambda_{DP}=\frac{s}{\bar f_c}\triangle\lambda_{JG}$, where $s$ is a scale factor. Then the values of $m_0$ and $A_g$ are determined from the following equations:
$$s\eta=m_g'(\bar f_c/3)\approx A_g,\quad\frac{s}{\sqrt{2}}=\sqrt{6}+\frac{m_0}{\sqrt{6}}.$$
To be $m_0$ positive, $s$ must be greater than $2\sqrt{3}$. To be in accordance with results of the DP model, we set $s=4.9$.\footnote{For smaller values of $s$, the limit load factor is underestimated and for greater values of $s$ the limit load factor is overestimated.} Comparison of yield surfaces (in the meridean plane) and flow directions for the DP and JG models  is illustrated in Figure \ref{fig_model_comparison}. Here, the fixed value $\triangle\lambda_{DP}=0.001$ is used for vectors representing the flow directions.
\begin{figure}[htbp]
\begin{center}
   \includegraphics[width=0.8\textwidth]{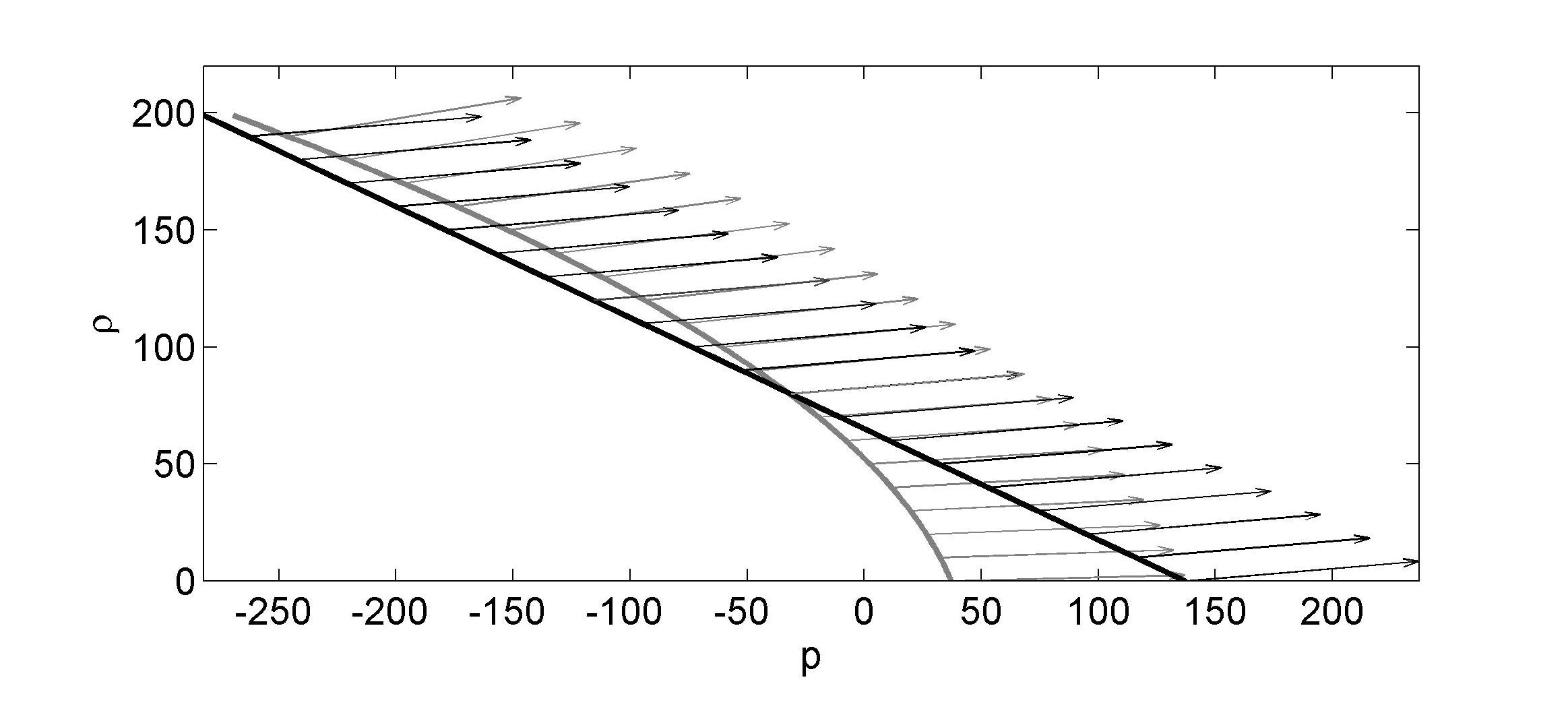}
   \caption{\small{Comparison of the yield surfaces (in the meridean plane) and the flow directions for the DP model (black) and the JG model (grey).}}
   \label{fig_model_comparison}
\end{center}
\end{figure}

Loading curves for the investigated $P1$, $Q2$ meshes and the JG model are depicted in Figure \ref{fig.load_path_h_JG} and \ref{fig.load_path_h_Q2_JG}. We observe much faster convergence of the $P1$-loading curves than for the DP model. Moreover, the results for $P1$ and $Q2$ elements are comparable. The computed values of the limit load factor on the finest $P1$ and $Q2$ meshes are 4.124, and 4.107, respectively.
\begin{figure}[htbp]
\begin{minipage}[t]{0.47\textwidth}
  \center
   \includegraphics[width=\textwidth]{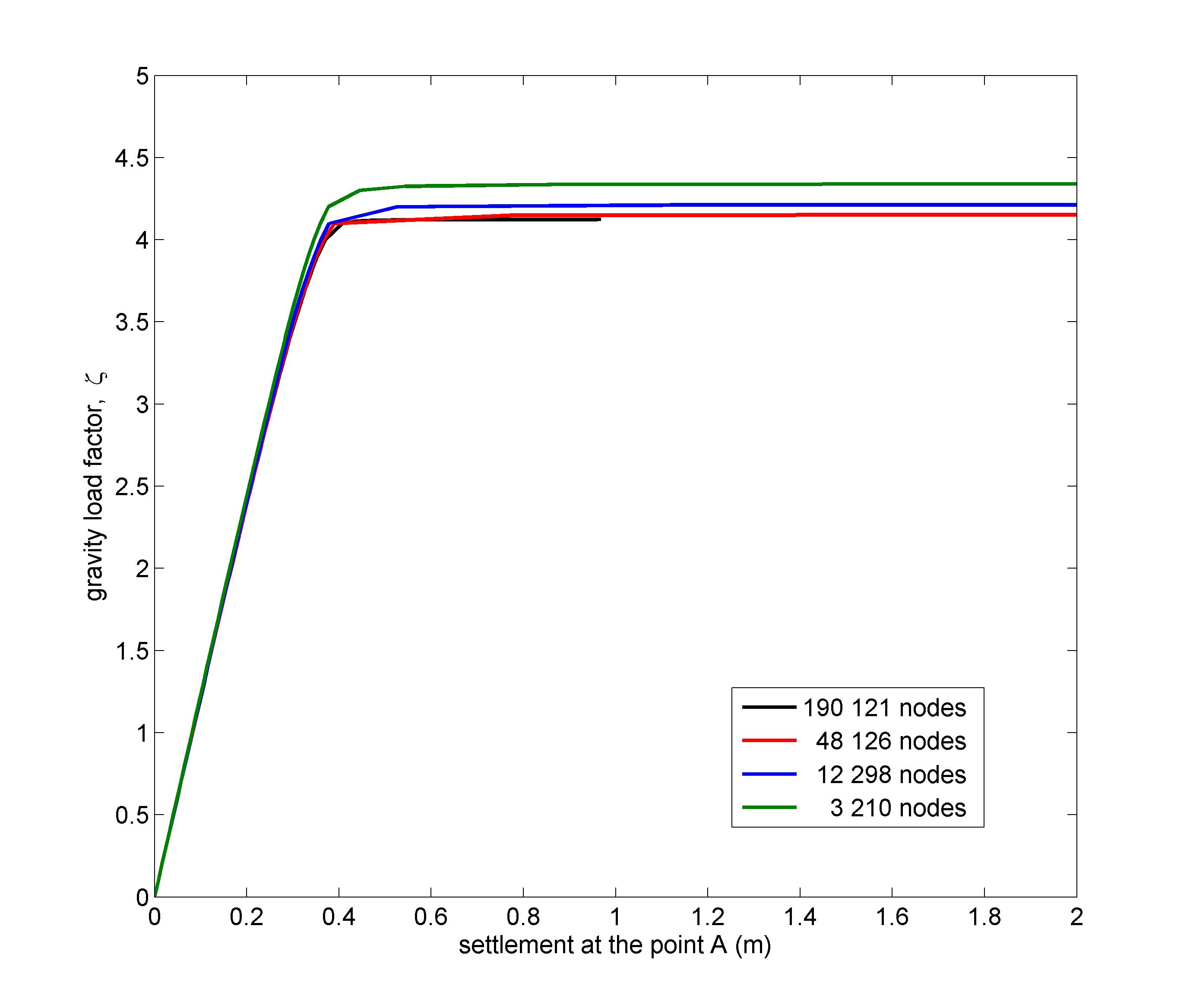}
   \caption{\small{$P1$ - loading paths for the simplified Jirasek-Grassl model.}}
   \label{fig.load_path_h_JG}
\end{minipage}
\hfill
\begin{minipage}[t]{0.47\textwidth}
  \center
  \includegraphics[width=\textwidth]{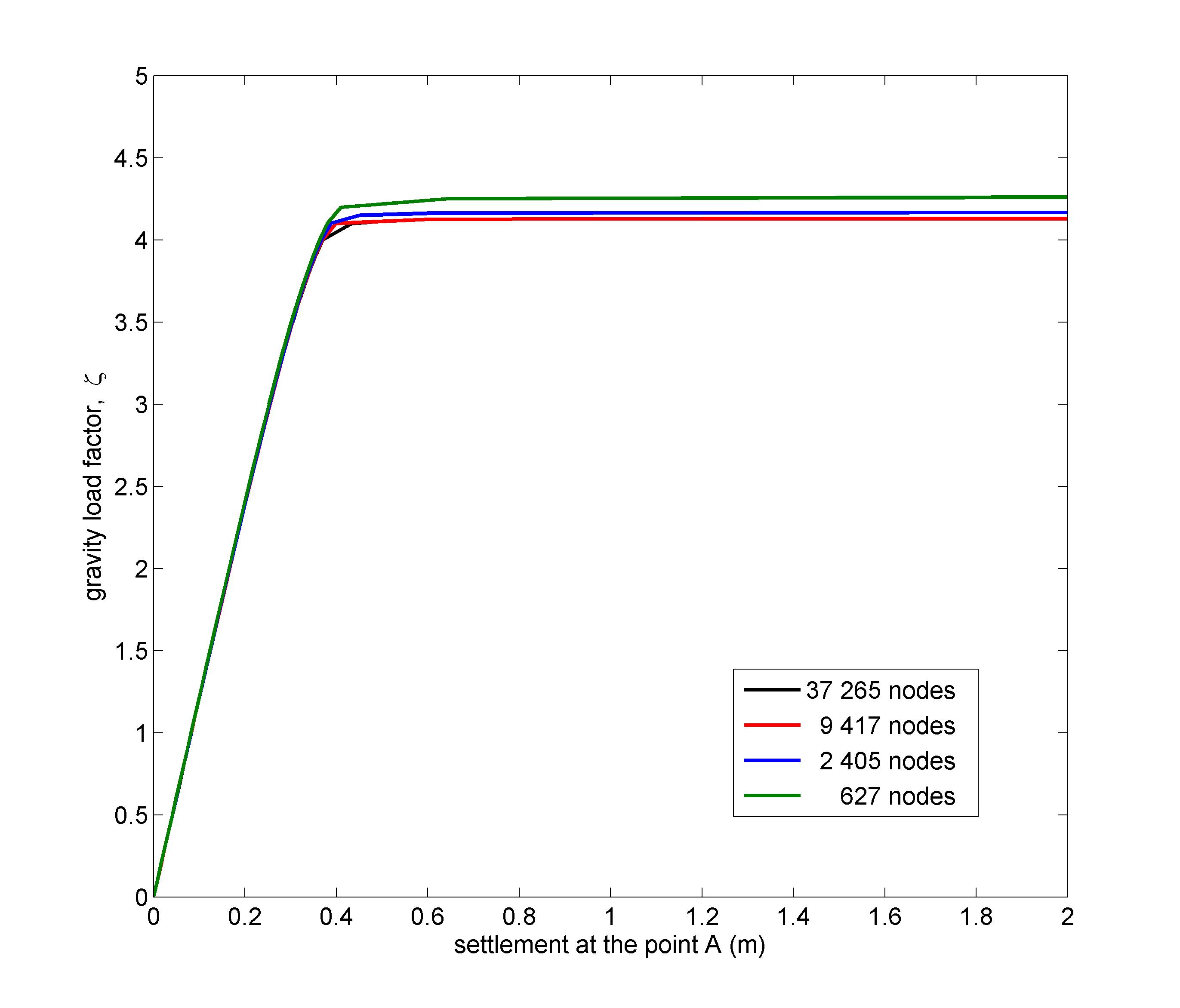}
   \caption{\small{$Q2$ - loading paths for the simplified Jirasek-Grassl model.}}
   \label{fig.load_path_h_Q2_JG}
\end{minipage}
\end{figure}

%%%%%%%%%%%%%%%%%%%%%%%%%%%%%%%%%%%%%%%%%%%%%%%%%%%%%%%%%%%%%%%%%
\section{Conclusion}

The main idea of this paper is that the subdifferential formulation of the plastic flow rule is also useful for computational purposes and numerical analysis. Namely, it has been shown that such an approach improves the implicit return-mapping scheme for non-smooth plastic pseudo-potentials as follows.
\begin{itemize}
\item The unique system of nonlinear equations is solved regardless on a type of the return.
\item It can be a priori determined the type of the return from a given trial state for some models (without knowledge of the solution).
\item The scheme can be more correct than the current one, and its form enables to study properties of constitutive operators like existence, uniqueness and semismoothness.
\end{itemize}
In this paper (PART I), the new technique has been systematically built on a specific class of models containing singularities only along the hydrostatic axis. Beside an abstract model, two particular models have been studied: The Drucker-Prager and the simplified Jirasek-Grassl model. However, the presented idea seems to be more universal. For example, it has been successfully used for the Mohr-Coulomb model in "PART II"  \cite{CKKSZ15b}.

\section*{Acknowledgements}
The authors would like to thank to Pavel Mar\v s\'alek for generating the quadrilateral meshes with midpoints.
This work has been supported by the project 13-18652S (GA CR) and the European Regional Development Fund in the IT4Innovations Centre of Excellence project (CZ.1.05/1.1.00/02.0070). 

%********************** References  **********************************************************************

{\small
}

\end{document}